\documentclass[11pt]{article}
\usepackage[utf8]{inputenc}

\usepackage{setspace}

\usepackage[utf8]{inputenc} 
\usepackage{amssymb,amsmath,bm,amssymb,physics,mathtools,amsthm}
\usepackage[flushleft]{threeparttable}
\usepackage{dcolumn}
\usepackage[titletoc]{appendix}
\usepackage{color}
\usepackage{nameref}
\usepackage[usenames,dvipsnames,svgnames,table]{xcolor}
\usepackage[colorlinks=true]{hyperref}
\usepackage{pgfplots} 
\usepackage{adjustbox}
\usepackage{booktabs} 
\usepackage{array} 
\usepackage{paralist} 
\usepackage{diagbox}
\usepackage{threeparttable}
\usepackage{multirow}
\usepackage{float}
\usepackage{upgreek}
\usepackage{bbm}

\usepackage{natbib}
\setlength\bibhang{0pt}
\usepackage{hyperref}
\hypersetup{colorlinks=true,allcolors=[rgb]{0,0,0.45}}
\usepackage[outdir=./]{epstopdf}

\linespread{1.25}

\newtheorem{theorem}{Theorem}

\newtheorem{proposition}{Proposition}

\newtheorem{lemma}{Lemma}

\newtheorem{claim}{Claim}
\newtheorem{assumption}{Assumption}

\theoremstyle{definition}
\newtheorem{definition}{Definition}

\theoremstyle{remark}
\newtheorem*{remark}{Remark}

\newtheoremstyle{dotless}{}{}{}{}{}{}{ }{}
\theoremstyle{dotless}


\usepackage{geometry} 
\geometry{a4paper} 
\geometry{margin=1.25in} 

\usepackage{graphicx} 


\usepackage{booktabs} 
\usepackage{array} 
\usepackage{paralist} 
\usepackage{verbatim} 
\usepackage{subfig} 

\usepackage{fancyhdr} 
\pagestyle{fancy} 
\lhead{}\chead{}\rhead{}
\lfoot{}\cfoot{\thepage}\rfoot{}

\usepackage{sectsty}
\allsectionsfont{\sffamily\mdseries\upshape} 

\usepackage[nottoc,notlof,notlot]{tocbibind} 
\usepackage[titles,subfigure]{tocloft} 


\newcommand{\changeurlcolor}[1]{\hypersetup{linkcolor=#1}} 
\changeurlcolor{[rgb]{0.10,0.0,0.35}}      

\title{Production Heterogeneity in Collective Labor Supply Models with Children}

\author{Charles Gauthier\thanks{KU Leuven, Belgium (email: charles.gauthier@kuleuven.be). I wish to thank seminar participants at Uppsala University and conference participants at the EWMES 2023, SAET 2024, SCSE 2024, CEA 2024, and NASMES 2024 for their comments. I also wish to thank
Laurens Cherchye, Thomas Demuynck, and Bram De Rock for their support, comments, and
suggestions. I further thank Damian Kozbur and Michael Wolf for useful comments. Any error is my own.}}
\date{\textbf{[Working Paper]} \\ This version: December 2024}

\begin{document}
    \maketitle
    
    \begin{abstract}
    \noindent
    Children welfare is at the center of many welfare reforms such as cash transfers to families and training programs to parents. A key goal for policy-makers is to evaluate the costs and benefits of such reforms. The main challenge lies in that the outcome of interest, children welfare, is unobservable. To address this issue, I consider a collective labor supply model with children where adult members have preferences over their own leisure, expenditures, and children welfare. I show that the model nonparametrically partially identifies the impacts of parental inputs on children welfare in panel data. I then propose a novel estimation strategy that accommodates measurement error and can be used to efficiently construct valid confidence sets. Using Dutch data on couples with children, I investigate the structure of the expected production technology and how it varies with household characteristics. I find that the production of children welfare is characterized by decreasing returns to scale and large heterogeneity across household types. In particular, I find that children from disadvantaged households, whose parents have low education levels and are not homeowners, are significantly worse off. My results highlight the importance for welfare reforms to include policies targeted at improving children home environment. \\
    \noindent
    \textbf{JEL Classification:} D11, D12, D13, C51, C63.\\
		
    \end{abstract}


    \section{Introduction} \label{Section1}

    Parents have a significant impact on children welfare through a variety of individual and collective investments such as time spent with children and children expenditure. These decisions are influenced by parents' preferences for leisure, consumption, and children welfare, as well as by their time and budget constraints. This paper provides a framework to analyze the complex decision-making process that underlies parental investments in children welfare and underscores the importance of addressing disparities in parenting skills across different demographics to mitigate achievement gaps among children.

    It is widely recognized that the unitary model, which assumes that household members preferences can be represented by a single household utility function, is inappropriate for analyzing household data (see e.g., \cite{FortinLacroix1997} and \cite{Browning1998}). In an effort to provide a proper foundation to analyze household behavior, \cite{Chiappori1988,Chiappori1992} suggested a collective model in which household members have distinct preferences and whose allocations are the result of a Pareto efficient bargaining process. This framework has proven to be empirically successful at rationalizing household decision-making (e.g., \citeauthor{Cherchye2008}, \citeyear{Cherchye2008}) and understanding power dynamics within the household (e.g., \citeauthor{Cherchye2011A}, \citeyear{Cherchye2011A}).

    The collective model was later extended to household production by \cite{AppsRees1997} and \cite{Chiappori1997}.\footnote{In a similar model to the one of \cite{Chiappori1988}, \cite{AppsRees1988} already include home production to analyze the effects of taxation on welfare.} Both papers show that the distribution of resources within the household, known as the sharing rule, cannot be identified when the output from production is unobservable such as with children welfare. However, they show that the sharing rule can be identified up to a function of wages when the production technology exhibits constant returns to scale (CRS). Perhaps because of the serious identification problem that arises with home production, the literature has since maintained the constant returns to scale assumption \citep{Blundell2005,Chiappori2009,Cherchye2012,Hubner2023}.

    Once the household is recognized as a collection of individuals, it becomes possible to understand the impacts of a policy targeted at a specific individual. For example, the distribution of resources within the household has been shown to have significant impacts on children welfare as early as \cite{Thomas1990}. Motivated by this ``targeted" view, \cite{Blundell2005} extended the collective framework to caring parents where children welfare is treated as a public good in parents preferences and produced via time investment and children expenditure. Their ideas were then brought to the data by \cite{Cherchye2012} using a unique panel data set containing information on time use and expenditures.
    
    As \cite{AppsRees1997} acknowledged, however, there is a need to verify the empirical validity of the assumptions required for identification. Indeed, it is now well-recognized that children welfare is crucial in weighing the costs and benefits of social programs.\footnote{See \cite{Aizer2022} for a review of social programs in the United States and the importance of considering children welfare.} For example, policy-makers may be interested in determining optimal cash assistance for single mothers such as to increase children skills development \citep{Mullins2022}, cash transfers to poor families such as to improve children life outcomes \citep{Aizer2016}, and parenting programs for disadvantaged households such as to narrow children achievement gaps \citep{WaldfogelWashbrook2011}.\footnote{See \cite{Kalil2014} for a review of the literature on the importance of parenting on children development and \cite{Shah2024} for a recent survey of the literature on the impacts of cash transfers on families with children.} Clearly, returns on children expenditure play a key role in determining optimal cash transfers and parenting programs are mostly useful if households differ in their effectiveness at developing children skills. 
    
    The empirical literature has largely investigated these questions by using measurable quantities such as cognitive and noncognitive skills of children.\footnote{See \cite{delBoca2014} for early work on the estimation of the production function for child quality in the unitary model.} Although this approach has been fruitful, it requires rich data sets whose measure of skill is often noisy \citep{Cunha2008,Cunha2010}. In contrast, the collective approach proposed by \cite{Blundell2005} leaves children welfare as an unobservable, thus facilitating policy evaluations. I believe the collective model is a natural framework that can complement costs and benefits analyses of programs directed at families, but I am also sympathetic to \cite{AppsRees1997}'s reservations regarding conditions for identification.

    
        
    My first contribution is to address the plausibility of CRS within the collective model with children. Consistent with previous work, I show that the production function is nonparameterically identified under CRS in cross-sectional data, and nonidentified when CRS is relaxed. Nevertheless, I show that the production function can be nonparameterically partially identified in panel data when it exhibits decreasing returns to scale, even with unrestricted heterogeneity. Indeed, the panel structure of the data gives additional implications to shape constraints on the production technology and preferences. Since household members have preferences over children welfare, those constraints provide restrictions on the production function.

    My partial identification argument uses nonparametric revealed preference conditions implied by the collective labor supply model with children.\footnote{See also \cite{Dunbar2013} for a collective model with children that does not require the share of resources allocated to children to be known.} The revealed preference analysis of the collective model was developed in the original work of \cite{Cherchye2007,Cherchye2011A}. I extend their former characterization by further incorporating household production.\footnote{See \cite{Varian1984} for early work on a revealed preference analysis of production.} Interestingly, I show that the collective model implies profit maximizing behavior in the production of children welfare, thus giving rise to a two-step argument. In a first step, I show that restrictions on the distribution of returns to scale eliminate production functions from the set of profit-maximizing production functions. In a second step, I show that revealed preference conditions can restrict the distribution of returns to scale.

    My second contribution is to propose a novel estimation strategy to analyze the collective model with unrestricted preference and production heterogeneity. To this end, I use the framework developed by \cite{AK2021} which provides a tractable approach to make statistical testing and inference in partially identified models defined by shape constraints.\footnote{Their framework builds on the Entropic Latent Variable Integration via Simulation (ELVIS) methodology developed by \cite{Schennach2014}. Intuitively, ELVIS can be viewed as a generalization of the method of simulated moments \citep{Mcfadden1989,PakesPollard1989}.} The main challenge faced in applying their framework is that collective models tend to be highly nonlinear. This poses a nontrivial computational problem as existing implementations only work well for models defined by linear constraints.\footnote{For example, \cite{AK2021} consider the collective exponential discounting model of \cite{Adams2014} but only test necessary conditions to simplify the implementation. \cite{Gauthier2023} considers a model of price search, but assumes a quasilinear specification in the application that alleviates the computational burden.} The reason is that the method involves integrating out the set of solutions of the model, but nonlinearities make it hard to sample from the feasible space.
    
    I solve this practical limitation by proposing a blocked Gibbs sampler that allows direct sampling from the feasible space even when the latter does not define a polytope.\footnote{Direct sampling generally uses a Hit-and-Run algorithm that requires the feasible space to define a (convex) polytope. See \cite{AK2021} for an application to models defined by shape constraints and \cite{Demuynck2021} for an application to models defined by the Generalized Axiom of Revealed Preference (GARP) as introduced by \cite{Varian1982}.} Intuitively, the idea is to break down the feasible space into conditional convex polytopes for which closed-form bounds on the support of the latent variables can be obtained. I observe that my methodology may prove useful for other collective models (\citeauthor{Cherchye2011B}, \citeyear{Cherchye2011B}, \citeauthor{Aspremont2019}, \citeyear{Aspremont2019}, \citeauthor{Cherchye2020}, \citeyear{Cherchye2020}) and, more generally, other models that contain nonlinear constraints.
    
    My third contribution is to empirically investigate the impacts of parental inputs on children welfare in households with children. To do so, I use the Longitudinal Internet Studies for the Social sciences (LISS) panel data from \cite{Cherchye2012}. I first investigate the average impacts of parental inputs on children welfare. I find that a doubling of all inputs increases children welfare by about $35\%$ on average, hence providing empirical evidence against the CRS assumption. More precisely, I find that a $10\%$ increase in time spent on childcare increases children welfare by about $1.1\%$ for fathers and $1.13\%$ for mothers, while a $10\%$ increase in children expenditure increases children welfare by about $1.0\%$.    
    
    Second, I investigate how household characteristics impacts the average production technology. I find that higher education levels increase the impacts of parental time inputs on children welfare by about $6\%$ for mothers and about $2\%$ for fathers. Further, I find that the education level of the mother has a positive spillover effect of about $3\%$ on the impacts of time inputs by fathers. Interestingly, for households that are not homeowners, the impacts of parental time inputs is about $7\%$ and $11\%$ lower for mothers and fathers, respectively. Taken together, my results show that children from disadvantaged households, whose parents often have lower education levels and are not homeowners, are significantly worse off.

    The approach taken in this paper relates to the research program initiated by \cite{BBC2003,BBC2007,BBC2008}. In this series of papers, the authors show how to combine revealed preference restrictions with additional information (e.g., expansion paths) to improve bounds on cost-of-living indices and demand responses. I also exploit revealed preference restrictions, but use them to learn about the production of children welfare within a collective model. I take advantage of recent developments in the partial identification literature to impose these restrictions in a cohesive statistical framework. Moreover, my method allows for unrestricted individual heterogeneity, an important feature for the application.
    
    My econometric framework views individuals within the household as random draws from a fixed utility distribution. Given the panel structure of the data, this amounts to a random utility model where a household preferences are drawn in the first period and kept constant. Hence, the approach relates to the burgeoning nonparametric random utility literature started by \cite{KS2018}.\footnote{The theoretical ideas were put forth by \cite{McfaddenRichter1990} and \cite{McFadden2005}.} Contrary to my method, their approach considers a unitary model but only requires data on cross-sectional distributions of choices. Subsequent work include \cite{Hubner2020}, \cite{Deb2023}, \cite{Kashaev2023}, \cite{LazzatiQuahShirai2023}, and \cite{Tebaldi2023}.

    The paper is organized as follows. Section \ref{Section:Model} describes the collective model and characterizes its implications. Section \ref{Section:Production} studies identification in the model. Section \ref{section:specification} presents the empirical specification. Section \ref{Section:Estimation} presents the estimation strategy. Section \ref{Section:Data} presents the data set used in the application. Section \ref{Section:Results} presents the empirical results. Section \ref{Section:Conclusion} concludes. The Appendix contains proofs that are not in the main text and my Gibbs sampler.
    

    \section{Household Model} \label{Section:Model}

    This section presents the environment considered in the paper, the collective model, and its empirical implications.

    \subsection{Environment}

    I consider households with two adults $(i=1, 2)$ and children. I assume that parents care about their children and incorporate this feature in the model by treating children welfare as a public good. The preferences of each adult household member over leisure, expenditures and children welfare are represented by a utility function $U^{i}$ that is continuous, increasing, and concave.
    
    At every observation $t \in \mathcal{T} = \{1, 2, \dots T\}$, adult household members spend their time on leisure $l_{t}^i$, market work $b_{t}^i$, and childcare $h_{t}^i$ such that the following time constraint is satisfied:
    \begin{equation*}
        l_{t}^i + b_{t}^i + h_{t}^i = \tau,
    \end{equation*}

    \noindent
    where $\tau$ is the total amount of time available in a time period. Parents use time spent on childcare and children expenditure $(c_{t})$ to produce children welfare. The relationship between parents inputs and children welfare is formalized through the production function
    \begin{equation*}
        W_t \equiv F(h_{t}^{1},h_{t}^{2},c_{t})e^{\epsilon_t},
    \end{equation*}

    \noindent
    where $\epsilon_t \in \mathbb{R}$ represents a productivity shock. Each household member receives a wage $w_{t}^i$ per unit of market work. As such, the budget constraint is given by
    \begin{equation*}
       q_{t} + Q_{t} + c_{t} = y_{t} + w_{t}^{1}b_{t}^{1} + w_{t}^{2}b_{t}^{2},
    \end{equation*}

    \noindent
    where $q_{t} \in \mathbb{R}_{+}$ represents expenditure on private goods, $Q_{t} \in \mathbb{R}_{+}$ represents expenditure on public goods, and $y_{t} > 0$ represents nonlabor income. 

    Since private expenditure cannot be used simultaneously by both household members, it has to be split in some way between them.
    \begin{definition}
        For every observation $t \in \mathcal{T}$, I say that $q_{t}^{i} \in \mathbb{R}_{+}$, $i \in \{1,2\}$, represent personalized private expenditures of each household member if $\sum_{i=1}^{2} q_{t}^{i} = q_t$.
    \end{definition}

    \noindent
    Household members get utility from their share of private expenditure such that their preferences depend on leisure, private expenditure, public expenditure, and children welfare. In what follows, I assume that private expenditure of each household member is observed to match the data available in the application. However, the results can be generalized to the case where only total private expenditure is observed.

    Let $\mathcal{U}$ be the set of continuous, increasing, and concave utility functions and $\mathcal{W}$ be the set of continuous, increasing, and concave in $(h_{t}^{1},h_{t}^{2},c_t)$ production functions. A household $j \in \mathcal{J}$ is an i.i.d. draw $(U_j^1,U_j^2,W_j)$ from $\mathcal{W}$ and a data set $D_j := \{(q_{jt}^{i}, Q_{jt}, c_{jt}, b_{jt}^{i}, h_{jt}^{i},w_{jt}^{i})_{i=1}^{2}\}_{t \in \mathcal{T}}$ is an i.i.d. draw from some distribution. To avoid overcrowding, I do not explicitly write the household subscript $j$ on variables unless it is relevant. The next subsection formalizes the relationship between the data and the abstract notion of household through the lenses of a model.

    \subsection{Collective Model}
    
    I follow \cite{Chiappori1988,Chiappori1992} and assume that household members choose an intrahousehold allocation that is Pareto efficient. This choice is motivated by the observation that Pareto efficiency is a minimal condition for optimal resource allocation (and hence, rationality) in a group setting. Hence, for every observation $t \in \mathcal{T}$, the household picks an intrahousehold allocation that solves
    \begin{equation} \label{model}
        \max_{ (l^1,l^2,h^1,h^2,q^{1},q^{2},Q,c) \in \mathbb{R}_{+}^{2} \times \mathbb{R}_{++}^{2} \times \mathbb{R}_{+}^{2} \times \mathbb{R}_{+} \times \mathbb{R}_{++}} \mu_{t}^{1} U^{1}(l^1,q^1,Q,W) + \mu_{t}^{2} U^{2}(l^2,q^{2},Q,W),
    \end{equation}
	
    \noindent
    subject to satisfying the constraints
    \begin{align*}
        (q^{1} + q^{2}) + Q + c &= y_t + w_{t}^{1} b^1 + w_{t}^{2}b^2 \\
        l^i + b^i + h^i &= \tau \; (i = 1,2) \\
        W &= F(h^1,h^2,c)e^{\epsilon_t}.
    \end{align*}
	
    \noindent
    where $\mu_{t}^{i} > 0$ denote the bargaining power of household member $i$. Note that the model makes no assumption on the underlying process by which the Pareto efficient allocation is achieved. That is, the weights $\mu_{t}^{i}$ result from some black box bargaining process that takes place within the household.\footnote{Although Nash equilibria are not always Pareto efficient, the black box bargaining process could be a (Pareto efficient) Nash equilibrium. Indeed, since married couples effectively play a repeated game, an appeal to folk theorems provide some intuitive motivation for the idea that the Pareto efficient allocation is a (cooperative) Nash equilibrium.} 

    I propose a natural notion of collective rationalizability based on the household maximization problem.
    \begin{definition}
        Let $D$ be a data set. The model \eqref{model} rationalizes the data if there exist concave utility functions $U^{i}$, a concave production function $F$, and productivity shocks $\epsilon_t$ such that the first-order conditions of the model are satisfied.
    \end{definition}

     This definition states that the model rationalizes the data if there are latent model parameters that satisfy the first-order conditions.\footnote{I assume that the solution is interior for simplicity of exposition, but the proofs encompass the possibility for corner solutions.} Since household members utility functions are concave and the budget set is linear, the first-order conditions exhaust the empirical content of the model.

    \subsection{Characterization}

    This section derives restrictions on the data implied by the model. First, I define a few notions that will be useful for the characterization of the model.
    \begin{definition}
        Let $D$ be a data set. For every observation $t \in \mathcal{T}$, I say that $\mathcal{P}_{t}^{i} \in \mathbb{R}_{++}$, $i \in \{1,2\}$, represent personalized (or Lindahl) prices for public expenditure of each household member if $\sum_{i=1}^{2} \mathcal{P}_{t}^{i} = 1$.
    \end{definition}

    \begin{definition}
        Let $D$ be a data set. For every observation $t \in \mathcal{T}$, I say that $P_{t}^{i} \in \mathbb{R}_{++}$, $i \in \{1,2\}$, represent personalized (or Lindahl) prices for children welfare of each household member if $\sum_{i=1}^{2} P_{t}^{i} = P_t$.
    \end{definition}

    It is worth noting that the personalized prices ($\mathcal{P}_{t}^{i}$, $P_{t}^{i}$) are not observed by the econometrician. Furthermore, while the price of public expenditure ($\mathcal{P}_t$) can safely be set to $1$, the price of children welfare ($P_t$) is unobservable as children welfare is a nonmarket good.
    
    I now introduce some revealed preference terminology. Let $a_{st}^{i} := w_{t}^{i}(l_{s}^{i} - l_{t}^{i}) + (q_{s}^{i} - q_{t}^{i}) + \mathcal{P}_{t}^{i}(Q_{s} - Q_t) + P_{t}^{i}(W_s - W_t)$ and $x_{t}^i := (l_{t}^{i},q_{t}^{i},Q_{t},W_t)$. I say that $x_{t}^i$ is (strictly) directly revealed preferred to $x_{s}^{i}$ if $a_{st}^{i} \; (<) \leq 0$. I say that $x_{t}^{i}$ is revealed preferred to $x_{s}^{i}$ if there exists a sequence $t_1, t_2, \dots, t_m$ such that $a_{t_1t}^{i} \leq 0$, $a_{t_2t_1}^{i} \leq 0$, \dots, $a_{t_{m-1}t_m}^{i}$, $a_{t_ms}^{i} \leq 0$. Likewise, I say that $x_{t}^i$ is strictly revealed preferred to $x_{s}^{i}$ if one of the inequalities in the sequence is strict.
    \begin{definition}
        A household member $i \in \{1,2\}$ satisfies the Generalized Axiom of Revealed Preference (GARP) if there exist personalized prices for public expenditure $\mathcal{P}_{t}^{i}$, personalized prices for children welfare $P_{t}^{i}$, and children welfare $W_t$ such that if $x_{t}^{i}$ is revealed preferred to $x_{s}^{i}$ then $x_{s}^{i}$ is not strictly directly revealed preferred to $x_{t}^{i}$. 
    \end{definition}

    The notion of revealed preference relates the ordinal value of allocations that enter preferences of each household member to their expenditure levels. In my setup, the presence of a public good $(Q)$ implies that the expenditure of an allocation depends on unknown personalized prices. Further, in the case of the public nonmarket good $(W)$ neither the price or the quantity is known. Finally, it is worth observing that childcare and children expenditure do not enter the definition of revealed preference as the preferences of a household member only depends on those through their impact on children welfare. 
    
     Next, I introduce a profit maximization condition for the production of children welfare.
    \begin{definition}
        A household satisfies profit maximization if there exist personalized prices for children welfare $P_{t}$, a production function $F$, and productivity shocks $\epsilon_t$ such that $P_tF(h_t^1,h_t^2,c_t)e^{\epsilon_t} - w_{t}^{1}h_{t}^{1} - w_{t}^{2}h_{t}^{2} - c_t \geq P_tF(h^1,h^2,c)e^{\epsilon_t} - w_{t}^{1}h^{1} - w_{t}^{2}h^{2} - c$ for all inputs $(h^1,h^2,c)$ and all $t \in \mathcal{T}$.
    \end{definition}

    \noindent
    In what follows, I make use of an analogous condition as GARP but for profit maximization.
    \begin{definition}
        A household satisfies the Generalized Axiom of Profit Maximization (GAPM) if there exist personalized prices for children welfare $P_t > 0$, numbers $F_t > 0$, and productivity shocks $\epsilon_t$ such that $P_tF_te^{\epsilon_t} - w_{t}^{1}h_{t}^{1} - w_{t}^{2}h_{t}^{2} - c_t \geq P_tF_se^{\epsilon_t} - w_{t}^{1}h_s^{1} - w_{t}^{2}h_s^{2} - c_s$ for all $s,t \in \mathcal{T}$. 
    \end{definition}
    
    \noindent
    It is possible to show that profit maximization is equivalent to GAPM \citep{Varian1984}. This equivalence is standard and I therefore take it for granted for the sake of brevity. The following result provides equivalent characterizations of the model.
    
    %
    \begin{theorem} \label{proposition1}
        Let $D$ be a given data set. The following conditions are equivalent:
        \begin{itemize}
            \item[(i)] The household model \eqref{model} rationalizes the data.
            \item[(ii)] There exist personalized prices for public expenditure $\mathcal{P}_{t}^{i} > 0$ such that $\mathcal{P}_{t}^{1} + \mathcal{P}_{t}^{2} = 1$, personalized prices for children welfare $P_{t}^{i} > 0$, numbers $U^{i}$, $\lambda_{t}^{i}$, $W_t$, $F_t > 0$ and productivity shocks $\epsilon_t$ such that for all $s,t \in \mathcal{T}$ and each adult member $i \in \{1,2\}$
            {
            \begin{gather*}
                U_{s}^{i} - U_{t}^{i} \leq \lambda_{t}^{i} \left[ w_{t}^{i} (l_{s}^{i} - l_{t}^{i}) + (q_{s}^{i} - q_{t}^{i}) + \mathcal{P}_{t}^{i}(Q_s - Q_t) + P_t^{i}(W_s - W_t) \right], \\
                F_{s} - F_{t} \leq \frac{w_{t}^{1}}{P_{t}e^{\epsilon_t}} (h_{s}^{1} - h_{t}^{1}) + \frac{w_{t}^{2}}{P_{t}e^{\epsilon_t}} (h_{s}^{1} - h_{t}^{2}) + \frac{1}{P_te^{\epsilon_t}} (c_s - c_t),
            \end{gather*} }
            \noindent
            where $W_t = F_te^{\epsilon_t}$ for all $t \in \mathcal{T}$.
            
            \item[(iii)] There exist personalized prices for public expenditure $\mathcal{P}_{t}^{i} > 0$ such that $\mathcal{P}_{t}^{1} + \mathcal{P}_{t}^{2} = 1$, personalized prices for children welfare $P_{t}^{i} > 0$, numbers $F_t > 0$, and productivity shocks $\epsilon_t$ with $W_t = F_te^{\epsilon_t}$, such that GARP holds for each adult member $i \in \{1,2\}$ and GAPM holds.

        \end{itemize}
    \end{theorem}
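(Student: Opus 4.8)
The plan is to prove the three-way equivalence by establishing the cycle $(i)\Rightarrow(ii)\Rightarrow(iii)\Rightarrow(i)$. The edge $(i)\Rightarrow(ii)$ is a first-order-condition-plus-concavity argument; the edge $(ii)\Leftrightarrow(iii)$ is essentially Afriat's Theorem applied separately to each member together with the stated equivalence between profit maximization and GAPM; and the construction $(iii)\Rightarrow(i)$ is the step needing the most care. For $(i)\Rightarrow(ii)$, I would form the Lagrangian of \eqref{model} with multiplier $\rho_t$ on the budget constraint, $\nu_t^i$ on each time constraint, and $\zeta_t$ on the production constraint $W=F(h^1,h^2,c)e^{\epsilon_t}$. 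Eliminating $\nu_t^i$ through the $b^i$-condition $\nu_t^i=\rho_t w_t^i$ yields, for each member, the gradient identities $\partial_{l^i}U^i=\lambda_t^i w_t^i$ and $\partial_{q^i}U^i=\lambda_t^i$ with $\lambda_t^i:=\rho_t/\mu_t^i>0$. Defining $\mathcal{P}_t^i:=\mu_t^i\partial_Q U^i/\rho_t$ and $P_t^i:=\mu_t^i\partial_W U^i/\rho_t$ turns the $Q$- and $W$-conditions into the adding-up restrictions $\mathcal{P}_t^1+\mathcal{P}_t^2=1$ and $P_t^1+P_t^2=P_t$ (with $P_t:=\zeta_t/\rho_t$), while the $h^i$- and $c$-conditions give the production gradient $\nabla F|_t=(w_t^1,w_t^2,1)/(P_te^{\epsilon_t})$. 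Since $U^i$ and $F$ are concave, the subgradient inequalities $U^i(x_s^i)-U^i(x_t^i)\le\nabla U^i|_t\cdot(x_s^i-x_t^i)$ and its analogue for $F$, after substituting these gradients, are exactly the two inequalities in $(ii)$ with $U_t^i:=U^i(x_t^i)$, $F_t:=F(h_t^1,h_t^2,c_t)$, and $W_t=F_te^{\epsilon_t}$; positivity follows from monotonicity.

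For $(ii)\Rightarrow(iii)$, note that the first family of inequalities is precisely the Afriat inequality $U_s^i-U_t^i\le\lambda_t^i a_{st}^i$ with $\lambda_t^i>0$. A standard chaining argument then delivers GARP for member $i$: if $x_t^i$ is revealed preferred to $x_s^i$, chaining the inequalities along the revealing sequence yields $U_s^i\le U_t^i$, so $a_{ts}^i<0$ would force $U_t^i-U_s^i\le\lambda_s^i a_{ts}^i<0$, i.e. $U_t^i<U_s^i$, a contradiction. For the production part, multiplying the second inequality through by $P_te^{\epsilon_t}>0$ and rearranging gives $P_tF_te^{\epsilon_t}-w_t^1h_t^1-w_t^2h_t^2-c_t\ge P_tF_se^{\epsilon_t}-w_t^1h_s^1-w_t^2h_s^2-c_s$ for all $s,t$, which is exactly GAPM. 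The reverse edge $(iii)\Rightarrow(ii)$ follows by invoking Afriat's Theorem for each member (GARP produces the numbers $U_t^i,\lambda_t^i$) and reversing this algebra for the production inequality.

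For the construction $(iii)\Rightarrow(i)$, I would apply Afriat's Theorem to each member $i$, viewing $(l^i,q^i,Q,W)$ as commodities with prices $(w_t^i,1,\mathcal{P}_t^i,P_t^i)$: GARP yields a continuous, increasing, concave $U^i$ with $U^i(x_t^i)=U_t^i$ whose gradient is proportional to these prices. I would then apply the equivalence between GAPM and profit maximization \citep{Varian1984} to obtain a continuous, increasing, concave $F$ with $F(h_t^1,h_t^2,c_t)=F_t$ satisfying the profit first-order conditions at output price $P_te^{\epsilon_t}$ and input prices $(w_t^1,w_t^2,1)$. To glue these into a solution of \eqref{model}, normalize $\rho_t=1$ and set $\mu_t^i:=1/\lambda_t^i$, $\nu_t^i:=w_t^i$, $\zeta_t:=P_t$. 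One then checks that every stationarity condition holds: the $q^i$- and $l^i/b^i$-conditions by construction of $U^i$, the $Q$- and $W$-conditions because $\mathcal{P}_t^1+\mathcal{P}_t^2=1$ and $P_t^1+P_t^2=P_t$, and the $h^i$- and $c$-conditions because $\nabla F|_t$ is the profit-maximizing gradient; the budget, time, and production constraints hold on the data by construction, with $W_t=F_te^{\epsilon_t}$.

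I expect the main obstacle to be this final gluing step: the two members and the producer are coupled through the shared public good $Q$ and the nonmarket output $W$, so the single-agent objects delivered by Afriat's and Varian's theorems must be made mutually consistent. The device that resolves it is the Lindahl decomposition together with the adding-up conditions $\mathcal{P}_t^1+\mathcal{P}_t^2=1$ and $P_t^1+P_t^2=P_t$, which are exactly what convert the two individual marginal-rate conditions and the production condition into the aggregate stationarity conditions of \eqref{model}. A secondary point requiring attention is the strict positivity and monotonicity of the constructed prices and functions, which follows from the strict positivity of wages and of the personalized prices.
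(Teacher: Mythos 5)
Your proposal is correct and follows essentially the same route as the paper: first-order conditions plus concavity of $U^{i}$ and $F$ (with the Lindahl definitions $\mathcal{P}_t^i$, $P_t^i$ absorbing the bargaining weights) for $(i)\Rightarrow(ii)$, a chaining/cyclical-monotonicity argument plus rearrangement of the production inequality for $(ii)\Leftrightarrow(iii)$, and an Afriat-type construction to recover concave $U^i$ and $F$ whose supergradients reproduce the stationarity conditions. The only cosmetic difference is that you close the cycle via $(iii)\Rightarrow(i)$ by invoking Afriat's and Varian's theorems as black boxes, whereas the paper proves $(ii)\Rightarrow(i)$ by writing out the pointwise-minimum-of-affine-functions construction explicitly; these are the same argument.
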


    \noindent
    Theorem \ref{proposition1} shows that the Afriat inequalities are equivalent to GARP and that those conditions must be satisfied for both household members. The latter implies that the household problem has an equivalent characterization in terms of a two-step procedure \citep{Chiappori1988,Chiappori1992}. That is, the solution of the household maximization problem can be viewed as the outcome of separate utility maximization problems for each adult in the household conditional on a distribution of nonlabor income.

    It is interesting to note that neither the Afriat inequalities or GARP exhaust the empirical implications of the model. Indeed, the model further implies that the household satisfies GAPM, a necessary and sufficient condition for profit maximization. As such, household members increase each input in the production of children welfare up until the point where marginal revenue equates marginal cost. Note that this profit maximizing behavior is not assumed but implied by the model.

    \section{Empirical Content} \label{Section:Production}

    This section shows that the collective model informatively partially identifies the production function. Intuitively, if the production function exhibited constant returns to scale, the household would make zero profit as a firm and revenue $P_tW_t$ would equate costs $w_{t}^{1}h_{t}^{1} + w_{t}^{2}h_{t}^{2} + c_t$. In this special case, revenue would be identified and the first-order conditions would recover the production function from its partial derivatives. I show that the household revenue from producing children welfare is inversely proportional to its costs when the production function is homogeneous, where the factor of proportionality is given by its return to scale. I then leverage the panel structure of the data to bound returns to scale from shape constraints on the production function and preferences.

    \subsection{Point Identification}
    
    In what follows, I explicitly write the household subscript $j = 1, 2, \dots, J$ on variables and formalize the assumption that production functions are subject to Hicks-neutral productivity shocks.
    \begin{assumption} \label{Hicksneutrality}
        The productivity shocks are Hicks-neutral such that children welfare is given by
        \begin{equation*}
             W_{jt} = F_j(h_{jt}^{1},h_{jt}^{2},c_{jt})e^{\epsilon_{jt}} \iff \log(W_{jt}) = f_j(h_{jt}^{1},h_{jt}^{2},c_{jt}) + \epsilon_{jt},
    \end{equation*}
    where $f_j$ denote the natural logarithm of the production function and is assumed differentiable.
    \end{assumption}

    \noindent
    The assumption of Hicks-neutral productivity shocks is necessary to disentangle the impacts of productivity shocks and parental inputs on the production of children welfare. The differentiability of the log production function is a technical condition that is useful for the identification argument.

    Next, I suppose the production function is from the class of homogeneous production functions.
    \begin{assumption} \label{Homogeneity}
        The production function is homogeneous of degree $RTS_j \in (0,1]$.
    \end{assumption}

    \noindent
    The usefulness of the homogeneity assumption is motivated by the following result.
    \begin{lemma} \label{lemma:RTS}
        Suppose Assumptions \ref{Hicksneutrality}-\ref{Homogeneity} hold, then $RTS_j P_{jt}W_{jt} = w_{jt}^{1}h_{jt}^{1} + w_{jt}^{2}h_{jt}^{2} + c_{jt}$ for all $t \in \mathcal{T}$.
    \end{lemma}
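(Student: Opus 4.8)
The plan is to read off the profit-maximization content of Theorem~\ref{proposition1} and combine it with Euler's theorem for homogeneous functions. By part (i)$\Leftrightarrow$(iii) of Theorem~\ref{proposition1}, rationalizability of the data implies that the household acts as a profit maximizer in the production of children welfare, so the observed inputs $(h_{jt}^1,h_{jt}^2,c_{jt})$ solve
\[
\max_{(h^1,h^2,c)}\; P_{jt}\,F_j(h^1,h^2,c)\,e^{\epsilon_{jt}} - w_{jt}^1 h^1 - w_{jt}^2 h^2 - c .
\]
Under Assumption~\ref{Hicksneutrality} the production function is differentiable (since $f_j$ is) and, taking the solution to be interior as assumed in the text, I would record the three first-order conditions, each evaluated at the observed inputs:
\[
P_{jt} e^{\epsilon_{jt}}\frac{\partial F_j}{\partial h^1} = w_{jt}^1,\qquad
P_{jt} e^{\epsilon_{jt}}\frac{\partial F_j}{\partial h^2} = w_{jt}^2,\qquad
P_{jt} e^{\epsilon_{jt}}\frac{\partial F_j}{\partial c} = 1 .
\]
These equate the marginal revenue product of each input to its price, the price of expenditure $c$ being unity because $c$ is measured in currency.

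Next I would invoke homogeneity. Because $F_j$ is homogeneous of degree $RTS_j$ by Assumption~\ref{Homogeneity}, Euler's theorem gives
\[
h_{jt}^1\frac{\partial F_j}{\partial h^1} + h_{jt}^2\frac{\partial F_j}{\partial h^2} + c_{jt}\frac{\partial F_j}{\partial c}
= RTS_j\,F_j(h_{jt}^1,h_{jt}^2,c_{jt}),
\]
the derivatives again evaluated at the observed inputs. Multiplying through by $P_{jt} e^{\epsilon_{jt}}$ and substituting the three first-order conditions term by term on the left-hand side turns each marginal-product expression into the corresponding input cost, so the left-hand side becomes $w_{jt}^1 h_{jt}^1 + w_{jt}^2 h_{jt}^2 + c_{jt}$ while the right-hand side becomes $RTS_j\,P_{jt}\,F_j e^{\epsilon_{jt}} = RTS_j\,P_{jt} W_{jt}$, where the last equality uses $W_{jt}=F_j e^{\epsilon_{jt}}$ from Assumption~\ref{Hicksneutrality}. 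Rearranging yields exactly $RTS_j\,P_{jt}W_{jt} = w_{jt}^1 h_{jt}^1 + w_{jt}^2 h_{jt}^2 + c_{jt}$, as claimed, for each $t\in\mathcal{T}$.

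The one delicate point, and the step I expect to be the main obstacle, is the passage from the finite inequality system GAPM to smooth first-order conditions: Theorem~\ref{proposition1} delivers numbers $F_t$ and a profit-maximization property stated over the finite data, not a priori a differentiable production function whose gradient one may set to zero. I would address this by working with the rationalizing $F_j$ supplied by the definition of rationalizability, which is concave and, under Assumption~\ref{Hicksneutrality}, differentiable; interiority of the observed allocation (assumed in the text) then ensures the gradient conditions hold with equality at the observed point, so that Euler's identity and the first-order conditions can be combined as above. The homogeneity degree $RTS_j\in(0,1]$ plays no role beyond supplying the Euler scaling factor, and positivity of the observed inputs guarantees that the homogeneity relation applies at the relevant point.
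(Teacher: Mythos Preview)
Your proof is correct and follows essentially the same approach as the paper: derive the three first-order conditions equating marginal revenue products to input prices, then apply Euler's theorem for the homogeneous $F_j$ and use $W_{jt}=F_j e^{\epsilon_{jt}}$ to obtain the result. The only cosmetic difference is that the paper cites the household model's first-order conditions directly rather than routing through Theorem~\ref{proposition1}'s profit-maximization characterization, so your concluding paragraph about the ``delicate point'' is unnecessary caution.
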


    \noindent
    Lemma \ref{lemma:RTS} implies that the identification problem can be stated in terms of restrictions on $RTS_j$ rather than restrictions on $P_{jt}W_{jt}$ directly. This is useful because I may have better economic intuition on returns to scale exhibited by the production function than on the value of children welfare. For example, the literature typically assumes that the production functions exhibits constant returns to scale (CRS), a special case of my class of production functions.
    
    Finally, I impose a mild regularity condition that ensures sufficient variation in inputs in the cross-section.
    \begin{assumption} \label{Variation}
        The cross-sectional distribution of inputs $(h_{jt}^{1},h_{jt}^{2},c_{jt})_{t \in \mathcal{T}}$ has full support and is absolutely continuous.
    \end{assumption}

    \noindent
    The requirement that the distribution of inputs spans its full support is necessary to identify the whole production function. For practical purposes, it is generally sufficient to identify the production function over the support of the data. In that case, the full support condition can be relaxed without any harm. 
    
    My first result shows that, if returns to scale were known, the expected log production function would be identified.
    \begin{proposition} \label{proposition:identification}
        Suppose Assumptions \ref{Hicksneutrality}-\ref{Variation} hold and $RTS_j$ is known, then the expected log production function is nonparametrically identified up to scale.
    \end{proposition}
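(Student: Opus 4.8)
The plan is to recover the expected log production function from its gradient, which profit maximization pins down at every observed input bundle. First I would exploit the profit-maximization condition embedded in Theorem \ref{proposition1}: at an interior optimum the household equates marginal revenue to marginal cost in each input, so $P_{jt}e^{\epsilon_{jt}}(\partial F_j/\partial h^i)(X_{jt}) = w_{jt}^i$ and $P_{jt}e^{\epsilon_{jt}}(\partial F_j/\partial c)(X_{jt}) = 1$, where $X_{jt} := (h_{jt}^1,h_{jt}^2,c_{jt})$. Writing $f_j = \log F_j$ and using $\partial f_j/\partial x = F_j^{-1}\,\partial F_j/\partial x$ together with $P_{jt}e^{\epsilon_{jt}}F_j(X_{jt}) = P_{jt}W_{jt}$ under Assumption \ref{Hicksneutrality}, these conditions collapse to $\nabla f_j(X_{jt}) = (w_{jt}^1, w_{jt}^2, 1)/(P_{jt}W_{jt})$. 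The unobservable $P_{jt}W_{jt}$ is then the only remaining unknown, and here Lemma \ref{lemma:RTS} does the work: since $P_{jt}W_{jt} = C_{jt}/RTS_j$ with $C_{jt} := w_{jt}^1 h_{jt}^1 + w_{jt}^2 h_{jt}^2 + c_{jt}$ observed, knowledge of $RTS_j$ makes the full gradient $\nabla f_j(X_{jt}) = (RTS_j/C_{jt})(w_{jt}^1, w_{jt}^2, 1)$ a function of observables alone.

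Next I would move from a single chosen bundle to the whole input space. For a fixed $j$ only finitely many points $\{X_{jt}\}_{t\in\mathcal{T}}$ are observed, so individual functions cannot be recovered and the identified object must be an average across the population of production functions. I would therefore define the expected log production function $\bar f(x) := \mathbb{E}[f_j(x)]$ and target its gradient $\nabla \bar f(x) = \mathbb{E}[\nabla f_j(x)]$, justifying the interchange of expectation and differentiation by the maintained differentiability of $f_j$ (Assumption \ref{Hicksneutrality}) and a dominated-convergence argument. By Assumption \ref{Variation} the inputs have full support and an absolutely continuous distribution, so the conditional mean gradient $m(x) := \mathbb{E}[(RTS_j/C_{jt})(w_{jt}^1,w_{jt}^2,1)\mid X_{jt}=x]$ is well defined and identified from the joint distribution of observables at every $x$. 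The step linking $m$ to $\nabla\bar f$ — namely that conditioning on the endogenously chosen input bundle does not distort the average gradient — is where the substance lies, and I expect this to be the \emph{main obstacle}.

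Finally I would reconstruct $\bar f$ from its gradient by line integration. Because each $\nabla f_j$ is a genuine gradient field its mixed partials commute, and this symmetry is inherited by $\nabla \bar f$, so the field is conservative and the line integral $\bar f(x) = \bar f(x_0) + \int_\gamma \nabla\bar f(s)\cdot ds$ is path-independent; full support guarantees a feasible path between any $x_0$ and $x$. The integration determines $\bar f$ everywhere once the base value $\bar f(x_0)$ is fixed, but $\bar f(x_0)$ itself is not pinned down by gradient information. Since $f_j = \log F_j$, an undetermined additive constant in $\bar f$ is exactly a multiplicative constant in the production function, which is the sense in which identification holds only up to scale — consistent with children welfare being a nonmarket good whose price level is unobservable. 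Note that the homogeneity of Assumption \ref{Homogeneity} enters only through Lemma \ref{lemma:RTS}: it is what converts the unobservable revenue $P_{jt}W_{jt}$ into the observable cost $C_{jt}$ and thereby makes the gradient recoverable in the first place.
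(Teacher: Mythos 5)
Your proposal follows essentially the same route as the paper's proof: use the first-order conditions to express the gradient of the log production function as $(w_{jt}^{1},w_{jt}^{2},1)/(P_{jt}W_{jt})$, substitute $P_{jt}W_{jt}=E_{jt}/RTS_j$ via Lemma \ref{lemma:RTS}, take expectations (interchanging differentiation and integration), and integrate the expected gradient over the full support guaranteed by Assumption \ref{Variation}, leaving an additive constant that corresponds to the scale normalization. The one step you single out as the main obstacle --- that conditioning on the endogenously chosen input bundle does not distort the average gradient --- is passed over silently in the paper's own proof as well, so your argument is not missing anything the paper actually supplies.
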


    \begin{proof}
        The first-order conditions with respect to inputs imply that the household equates the marginal product of factors of production to their marginal costs such that 
        \begin{align*}
            \pdv{F_j(h_{jt}^{1},h_{jt}^{2},c_{jt})}{h_{jt}^{1}}e^{\epsilon_{jt}} &= \frac{w_{jt}^{1}}{P_{jt}} \\
            \pdv{F_j(h_{jt}^{1},h_{jt}^{2},c_{jt})}{h_{jt}^2}e^{\epsilon_{jt}} &= \frac{w_{jt}^{2}}{P_{jt}} \\ 
            \pdv{F_j(h_{jt}^{1},h_{jt}^{2},c_{jt})}{c_{jt}^2}e^{\epsilon_{jt}} &= \frac{1}{P_{jt}}.
        \end{align*}

        \noindent
        Divide the marginal products by $W_{jt}$ to obtain
        \begin{align*}
            \pdv{f_j(h_{jt}^{1},h_{jt}^{2},c_{jt})}{h_{jt}^{1}} &= \frac{w_{jt}^{1}}{P_{jt}W_{jt}} \\
            \pdv{f_j(h_{jt}^{1},h_{jt}^{2},c_{jt})}{h_{jt}^2} &= \frac{w_{jt}^{2}}{P_{jt}W_{jt}} \\ 
            \pdv{f_j(h_{jt}^{1},h_{jt}^{2},c_{jt})}{c_{jt}^2} &= \frac{1}{P_{jt}W_{jt}},
        \end{align*}

        \noindent
        where $f_j$ denote the log production function. Taking the expectation gives 
        \begin{align*}
            \pdv{\mathbb{E}[f(h_{t}^{1},h_{t}^{2},c_t)]}{h_{t}^{1}} &= \mathbb{E}\left[\frac{w_{t}^{1}}{P_{t}W_t}\right] \\
            \pdv{\mathbb{E}[f(h_{t}^{1},h_{t}^{2},c_t)]}{h_{t}^2} &= \mathbb{E}\left[\frac{w_{t}^{2}}{P_tW_t}\right] \\ 
            \pdv{\mathbb{E}[f(h_{t}^{1},h_{t}^{2},c_t)]}{c_{t}^2} &= \mathbb{E}\left[\frac{1}{P_tW_t}\right],
        \end{align*}

        \noindent
        where I interchanged the partial derivative and integral.\footnote{This is possible if $f_j$ is dominated by a function whose integral is finite, a natural assumption for a production function.} By Lemma \ref{lemma:RTS}, $RTS_j$ identifies $P_{jt}W_{jt}$ such that the expected marginal products are also identified. Next, variation in inputs in the cross-section allows us to integrate each marginal product, giving the following system of partial differential equations
        \begin{align*}
            \int_{h_{0}^{1}}^{h_{t}^{1}} \pdv{ \mathbb{E}[f(h_{t}^{1},h_{t}^{2},c_t)]}{h_{t}^{1}} dh_t^1 &= \mathbb{E}[f(h_{t}^1,h_t^2,c_t)] + C(h_t^2,c_t) \\
            \int_{h_{0}^{2}}^{h_{t}^{2}} \pdv{ \mathbb{E}[f(h_{t}^{1},h_{t}^{2},c_t)]}{h_{t}^{2}} dh_t^2 &= \mathbb{E}[f(h_{t}^1,h_t^2,c_t)] + C(h_t^1,c_t) \\ 
            \int_{c_{0}}^{c_{t}} \pdv{ \mathbb{E}[f(h_{t}^{1},h_{t}^{2},c_t)]}{c_{t}} dc_t &= \mathbb{E}[f(h_{t}^1,h_t^2,c_t)] + C(h_t^1,h_{t}^{2}).
        \end{align*}

        \noindent
        These equations can be used to recover the expected log production function up to a constant:
        \begin{align*}
            \mathbb{E}[f(h_{t}^{1},h_{t}^{2},c_t)] = &\int_{h_{0}^{1}}^{h_{t}^{1}} \pdv{ \mathbb{E}[f(h^1,h_0^2,c_0)]}{h_{t}^{1}} dh^1 + \int_{h_{0}^{2}}^{h_{t}^{2}} \pdv{\mathbb{E}[f(h_t^1,h^2,c_0)]}{h_{t}^{2}} dh^2 + \\ &+ \int_{c_{0}}^{c_{t}} \pdv{ \mathbb{E}[f(h_t^1,h_t^2,c)]}{c_{t}} dc - C,
        \end{align*}

        \noindent
        where $C$ is a constant of integration. Hence, the expected log production function is identified over the support of the data.
    \end{proof}

    Proposition \ref{proposition:identification} states that, in principle, the model imposes enough structure to nonparameterically identify the expected log production function provided returns to scale are known. In particular, the identification strategy could be used to nonparametrically estimate the expected log production function over the support of the data under constant returns to scale. Furthermore, observe that the previous argument would identify the production function if it were identical across households as I could simply take the exponential function of the log production function. Instead, with heterogeneity I obtain a lower bound on the expected production function by Jensen's inequality. Interestingly, note that identification does not require knowledge of children welfare per se. This is because any scaling of children welfare is offset by a rescaling of children welfare prices, thus leaving revenue $P_{jt}W_{jt}$ unchanged.\footnote{This mechanism has a natural economic interpretation. Namely, it captures the idea of scarcity whereby the value of a good decreases with its abundance.}

    \subsection{Partial Identification}

    The previous section showed that the expected log production function depends both on the data and the distribution of returns to scale. This section shows that revealed preference conditions provide an additional source of identification absent an assumption on returns to scale.

    Let $\zeta_j := \{(\mathcal{P}_{jt}^{i},P_{jt}^{i}, W_{jt}, F_{jt}, RTS_{j}, \epsilon_{jt})_{i \in \{1,2\}}\}_{t \in \mathcal{T}} \in Z|\mathcal{D}$ denote the set of household-specific latent variables that enter in the definition of GAPM and GARP, where $Z$ denote the support of the latent variables and $\mathcal{D}$ denote the support of the data. Further, let $\mathcal{P}_{Z|\mathcal{D}}$ denote the set of distributions of the latent variables conditional on the data and $\pi_0$ denote the distribution of the data. The previous discussion motivates the following definition of the identified set:
    \begin{align*}
         \Theta_{0} = \bigl\{ \mathbb{E}[f&(h_{t}^{1},h_{t}^{2},c_t)] : \exists \mu \in \mathcal{P}_{Z|\mathcal{D}} \; \text{such that} \; D_j \; \text{is rationalized by the} \; \\ &\text{model for all $j$} \; \text{and Assumptions \ref{Hicksneutrality}-\ref{Variation} hold} \bigr\}.
    \end{align*}


    In words, the identified set corresponds to the set of expected log production functions that arises for every distribution of the unobservables that rationalizes the data. Observe that Proposition \ref{proposition:identification} implies that each distribution of returns to scale maps to an expected log production function given the data. Hence, the identified set rules out some expected log production functions provided the model restricts the expected return to scale to a strict subset of its support. The following result shows that GAPM (nontrivially) partially identifies the expected log production function.
    \begin{proposition} \label{proposition:GAPM}
        Suppose Assumptions \ref{Hicksneutrality}-\ref{Variation} hold, then GAPM is refutable and the identified set may be nontrivial.
    \end{proposition}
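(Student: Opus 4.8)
The plan is to prove the two assertions separately: first that GAPM carries testable content (refutability), then that there are data-generating processes for which the induced identified set $\Theta_0$ is a nonempty proper subset of the candidate expected log production functions.

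\textbf{Refutability.} Writing $\pi_t := P_t e^{\epsilon_t} > 0$, $\mathbf{p}_t := (w_t^1, w_t^2, 1)$, and $\mathbf{z}_t := (h_t^1, h_t^2, c_t)$, the GAPM inequalities read $\pi_t(F_t - F_s) \geq \mathbf{p}_t \cdot (\mathbf{z}_t - \mathbf{z}_s)$ for all $s,t \in \mathcal{T}$, where the $F_t \in \mathbb{R}$ and $\pi_t > 0$ are free (the restriction $F_t > 0$ is harmless, since adding a common constant to all $F_t$ leaves every difference, and hence every inequality, unchanged). This is exactly a single-output profit-maximization rationalization problem in the sense of \cite{Varian1984}, but with both the output levels and the output price latent. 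I would reduce to two periods and show that the system can be infeasible. With $\mathcal{T} = \{1,2\}$ and $\delta := \mathbf{z}_1 - \mathbf{z}_2$, the two inequalities are $\pi_1(F_1 - F_2) \geq \mathbf{p}_1 \cdot \delta$ and $\pi_2(F_2 - F_1) \geq -\,\mathbf{p}_2 \cdot \delta$. A short case analysis on the sign of $F_1 - F_2$ shows that no positive $\pi_1,\pi_2$ exist whenever $\mathbf{p}_1 \cdot \delta > 0 \geq \mathbf{p}_2 \cdot \delta$.

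Concretely, $w_1 = (2,1)$, $h_1 = (2,1)$, $w_2 = (1,2)$, $h_2 = (1,2)$, and $c_1 = c_2$ give $\delta = (1,-1,0)$, so $\mathbf{p}_1 \cdot \delta = 1 > 0 > -1 = \mathbf{p}_2 \cdot \delta$. The economic content is a cost-minimization contradiction: at period-$1$ prices the chosen bundle $\mathbf{z}_1$ is strictly costlier than $\mathbf{z}_2$, which under profit maximization forces $F_1 > F_2$; yet at period-$2$ prices $\mathbf{z}_1$ is cheaper than the chosen $\mathbf{z}_2$ while producing more, so $\mathbf{z}_1$ strictly dominates the period-$2$ choice. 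Since these inputs are interior, Assumption \ref{Variation} guarantees such configurations occur with positive probability, so GAPM---and hence the model, which implies GAPM by Theorem \ref{proposition1}(iii)---is refutable.

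\textbf{Nontriviality.} Here I would parametrize feasible revenues through returns to scale. By Lemma \ref{lemma:RTS}, $P_{jt}W_{jt} = C_{jt}/RTS_j$ with $C_{jt} := w_{jt}^1 h_{jt}^1 + w_{jt}^2 h_{jt}^2 + c_{jt}$, and Proposition \ref{proposition:identification} maps each admissible $RTS_j$ to a distinct expected log production function (up to scale); the a priori set corresponds to letting $RTS_j$ range over all of $(0,1]$. The key observation is that CRS ($RTS_j = 1$) forces zero profit, $\pi_{jt}F_{jt} = C_{jt}$, so the GAPM inequalities collapse to $\pi_{jt} F_{js} \leq \mathbf{p}_{jt}\cdot \mathbf{z}_{js}$; substituting $F_{js} = C_{js}/\pi_{js}$ gives the ratio bounds $\pi_{jt}/\pi_{js} \leq (\mathbf{p}_{jt}\cdot\mathbf{z}_{js})/(\mathbf{p}_{js}\cdot\mathbf{z}_{js})$, whose product around any cycle eliminates the unobserved prices and yields the necessary cost-minimization condition $\prod_{\text{cycle}} (\mathbf{p}_{jt}\cdot\mathbf{z}_{js})/(\mathbf{p}_{js}\cdot\mathbf{z}_{js}) \geq 1$. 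I would construct data violating this cycle condition, so that $RTS_j = 1$ is infeasible, yet for which some strictly smaller $RTS_j \in (0,1)$---whose positive profit wedge $C_{jt}(1-RTS_j)/RTS_j$ relaxes the inequalities---together with GARP-consistent personalized prices rationalizes the data. For such a DGP, $\Theta_0$ is nonempty but excludes the CRS expected log production function, hence is nontrivial.

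\textbf{Main obstacle.} The refutability step is routine once the problem is cast as a Varian profit test. The substantive difficulty is the nontriviality construction: I must exhibit a single data set on which (i) GARP holds for both members, (ii) GAPM holds for some $RTS_j < 1$ so that $\Theta_0$ is nonempty, and (iii) GAPM fails at $RTS_j = 1$ so that at least one expected log production function is excluded, all while respecting the absolute-continuity and full-support requirements of Assumption \ref{Variation}. The delicate part is verifying that lowering $RTS_j$ genuinely enlarges the GAPM-feasible set---that the positive profit wedge supplies enough slack to restore feasibility after the zero-profit (CRS) version fails---and that the excluded revenue profile maps, through Proposition \ref{proposition:identification}, to an expected log production function that is genuinely outside $\Theta_0$ rather than one recoverable from another admissible $RTS_j$ configuration.
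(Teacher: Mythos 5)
Your refutability argument is correct and is essentially the paper's: both arguments find two observations whose cost comparisons force contradictory orderings of the latent output levels ($F_1 > F_2$ and $F_2 \geq F_1$). Your formulation via $\mathbf{p}_1\cdot\delta > 0 \geq \mathbf{p}_2\cdot\delta$ is a slightly cleaner packaging of the same idea; the paper's example ($w_t^1=h_t^1=c_t=10$, $w_t^2=h_t^2=5$ versus the mirrored bundle) is exactly an instance of it. One small caveat: Assumption \ref{Variation} concerns the cross-sectional distribution of inputs, not of wages, so it does not by itself ``guarantee such configurations occur with positive probability''; the paper avoids this by reading the proposition's claim as one of possibility (``may be nontrivial'') and simply exhibiting admissible data.

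The genuine gap is in the nontriviality half, and you flagged it yourself as the ``main obstacle'': you never exhibit the data set, nor verify that lowering $RTS_j$ restores feasibility after CRS fails. That construction is the substance of the paper's proof, and it is both shorter and more direct than your cycle-condition route. The paper divides the GAPM inequality by $F_{jt}$ and substitutes Lemma \ref{lemma:RTS} to get, for each ordered pair,
\begin{equation*}
  \frac{F_{js}}{F_{jt}} \leq 1 + \frac{RTS_j}{E_{jt}}\left[ w_{jt}^{1}(h_{js}^{1}-h_{jt}^{1}) + w_{jt}^{2}(h_{js}^{2}-h_{jt}^{2}) + (c_{js}-c_{jt}) \right],
\end{equation*}
so that the unknowns $P_{jt}$, $\epsilon_{jt}$ are eliminated and the entire system is parametrized by $RTS_j$ and the single ratio $F_{js}/F_{jt}$. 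It then plugs in explicit numbers ($w_{jt}^{1}=2$, $w_{jt}^{2}=4$, $h_{jt}^{1}=h_{jt}^{2}=15$, $c_{jt}=10$; $w_{js}^{1}=w_{js}^{2}=2$, $h_{js}^{1}=5$, $h_{js}^{2}=10$, $c_{js}=20$), yielding $F_{js}/F_{jt} \leq 1-0.3\,RTS_j$ and $F_{jt}/F_{js} \leq 1+0.4\,RTS_j$, which are jointly satisfiable if and only if $RTS_j \leq 5/6$. Note how this one example settles, in a single stroke, both of the things you were separately worried about: feasibility for $RTS_j \leq 5/6$ is immediate (pick $F_{js}/F_{jt}$ anywhere in the nonempty interval $[1/(1+0.4\,RTS_j),\,1-0.3\,RTS_j]$), and every $RTS_j > 5/6$ is excluded, so by Proposition \ref{proposition:identification} the corresponding expected log production functions are excluded and the identified set is nontrivial. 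Your plan of first deriving the CRS zero-profit cycle condition, violating it, and then separately proving that a positive profit wedge restores feasibility would likely work, but the ``delicate part'' you defer is exactly the proof; as written, the proposal establishes refutability but not nontriviality. (Your concern about simultaneously verifying GARP is also not needed here: the paper handles GARP separately in Proposition \ref{proposition:nonidentification}, where it is shown to impose no restrictions.)
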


    \begin{proof}
        GAPM implies that for all $s,t \in \mathcal{T}$, the following inequality holds
        \begin{equation*}
            F_{js} - F_{jt} \leq \frac{1}{P_{jt} e^{\epsilon_{jt}}} \left[ w_{jt}^{1}(h_{js}^{1} - h_{jt}^{1}) + w_{jt}^{2} (h_{js}^{1} - h_{jt}^{2}) + (c_{js} - c_{jt}) \right].
        \end{equation*}

        \noindent
        Dividing by $F_{jt}$ on both sides, I obtain
        \begin{equation*}
            \frac{F_{js}}{F_{jt}} \leq 1 + \frac{1}{P_{jt} W_{jt}} \left[ w_{jt}^{1}(h_{js}^{1} - h_{jt}^{1}) + w_{jt}^{2} (h_{js}^{1} - h_{jt}^{2}) + (c_{js} - c_{jt}) \right],
        \end{equation*}

        \noindent
        where I used the equality $W_{jt} = F_{jt} e^{\epsilon_{jt}}$. Using Lemma \ref{lemma:RTS}, I can rewrite the inequality as
        \begin{equation*}
           \frac{F_{js}}{F_{jt}} \leq 1 + \frac{RTS_j}{E_{jt}}  \left[ w_{jt}^{1}(h_{js}^{1} - h_{jt}^{1}) + w_{jt}^{2} (h_{js}^{1} - h_{jt}^{2}) + (c_{js} - c_{jt}) \right].
        \end{equation*}

        \noindent
        Observe that the left-hand side is always strictly positive. Further, note that wages and inputs can be such that $\frac{RTS_j}{E_{jt}} w_{jt}^{1}(h_{js}^{1} - h_{jt}^{1}) + w_{jt}^{2} (h_{js}^{1} - h_{jt}^{2}) + (c_{js} - c_{jt}) < 0$. For example, $w_{jt}^{1} = h_{jt}^{1} = c_{jt} = 10$, $w_{jt}^{2} = h_{jt}^{2} = 5$ and $w_{js}^{1} = h_{js}^{1} = c_{js} = 5$, $w_{js}^{2} = h_{js}^{2} = 10$ makes it negative for both $s,t \in \mathcal{T}$. Hence, for any $RTS_j \in (0,1]$ the following inequalities must hold simultaneously
        \begin{align*}
            \frac{F_{js}}{F_{jt}} < 1; \quad 
            \frac{F_{jt}}{F_{js}} < 1.
        \end{align*}

        \noindent
        Note that this implies $F_{js} < F_{jt}$ and $F_{jt} < F_{js}$, a contradiction. Next, consider wages and inputs $w_{jt}^{1} = 2$, $w_{jt}^{2} = 4$, $h_{jt}^{1} = h_{jt}^{2} = 15$, $c_{jt} = 10$ and $w_{js}^{1} = w_{js}^{2} = 2$, $h_{js}^{1} = 5$, $h_{js}^{2} = 10$, $c_{js} = 20$. Plugging those numbers in GAPM yields
        \begin{align*}
            \frac{F_{js}}{F_{jt}} &\leq 1 - 0.3RTS_j \\
            \frac{F_{jt}}{F_{js}} &\leq 1 + 0.4RTS_j.
        \end{align*}
        
        \noindent
        It is easy to see that these inequalities are only satisfied when $RTS_j \leq \frac{5}{6}$. In other words, GAPM may give upper bounds on returns to scale. Since this is true for every household, the support of the expected return to scale can also be restricted. By Proposition \ref{proposition:identification}, it follows that the identified set is nontrivial.
    \end{proof}

    %

    The careful reader will have noticed that the proof of Proposition \ref{proposition:GAPM} only shows that GAPM can provide upper bounds on returns to scale. Thus, one may wonder whether GARP can provide an additional source of identification. The following result gives a negative answer.
    \begin{proposition} \label{proposition:nonidentification}
        Suppose Assumptions \ref{Hicksneutrality}-\ref{Variation} hold, then GARP is not refutable and does not provide restrictions on the expected log production function.
    \end{proposition}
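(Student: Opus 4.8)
The plan is to establish both claims with one explicit construction of the latent variables that renders the revealed preference relation acyclic for an \emph{arbitrary} data set and, with a small refinement, for an arbitrary prescribed value of $RTS_j$. The leverage comes from the fact that the Lindahl prices $\mathcal{P}_t^i$, the children-welfare prices $P_t^i$, and the children-welfare levels $W_t$ are all unobserved, hence free. First I would fix $\mathcal{P}_t^i = 1/2$ (so that $\mathcal{P}_t^1 + \mathcal{P}_t^2 = 1$) and take $W_t$ strictly increasing in $t$. Writing $a_{st}^i = b_{st}^i + P_t^i(W_s - W_t)$, where $b_{st}^i := w_t^i(l_s^i - l_t^i) + (q_s^i - q_t^i) + \mathcal{P}_t^i(Q_s - Q_t)$ collects the fixed part, I would then choose $P_t^i$ large enough that $a_{st}^i > 0$ for every $s > t$; this is possible because there are finitely many such $s$ and $W_s - W_t > 0$. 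Under this choice $x_t^i$ can be directly revealed preferred to $x_s^i$ only if $s \le t$, so every nontrivial revealed preference relation runs from a later observation to an earlier one. The relation is thus acyclic and GARP holds trivially for each member $i \in \{1,2\}$; since the data set was arbitrary, GARP is not refutable.

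For the second claim I would show the same scheme accommodates any target $RTS_j \in (0,1]$. By Lemma \ref{lemma:RTS} the target fixes the revenue $P_{jt} W_{jt} = E_{jt}/RTS_j$, with $E_{jt} := w_{jt}^1 h_{jt}^1 + w_{jt}^2 h_{jt}^2 + c_{jt}$ observed. I would set $W_{jt} = R^{t}$ for a scalar $R > 1$ and split the implied price symmetrically, $P_{jt}^i = \tfrac12 E_{jt}/(RTS_j R^{t})$, so that $P_{jt} W_{jt} = E_{jt}/RTS_j$ holds exactly and each $P_{jt}^i > 0$. A direct computation gives $P_{jt}^i(W_{js} - W_{jt}) = \tfrac12 (E_{jt}/RTS_j)(R^{s-t} - 1)$ for $s > t$, which diverges in $R$ while $b_{st}^i$ stays fixed; choosing $R$ large enough (over the finitely many pairs and both members) again yields $a_{st}^i > 0$ for all $s > t$, so GARP holds for the prescribed $RTS_j$. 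Hence GARP is consistent with every distribution of returns to scale. Since Proposition \ref{proposition:identification} shows the expected log production function is pinned down by that distribution given the data, GARP excludes no admissible distribution and therefore imposes no restriction on the expected log production function.

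The crux is the apparent tension in the second step: GARP pushes $P_{jt}^i$ to be large so as to dominate $b_{st}^i$, while Lemma \ref{lemma:RTS} fixes the product $P_{jt} W_{jt}$, so enlarging the price seems to force $W_{jt}$ down and thereby shrink the very differences $W_{js} - W_{jt}$ that must dominate. The resolution, which is the heart of the proof, is that only the product $P_{jt} W_{jt}$ is determined while the overall level of $W$ is a free scale that cancels, so the binding inequalities depend on the \emph{spacing} of the $W_{jt}$ rather than their level; geometric spacing with a large ratio makes the welfare differences grow without bound at fixed revenue and breaks the tension. I would finish by checking the edge cases — the largest-$W$ period faces no constraints and self-comparisons are non-strict — to confirm that acyclicity, and hence GARP, holds exactly.
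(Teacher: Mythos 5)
Your proof is correct and takes essentially the same approach as the paper: fix an arbitrary $RTS_j$, use Lemma \ref{lemma:RTS} to hold revenue at $P_{jt}W_{jt} = E_{jt}/RTS_j$, and exploit the free scale of the unobserved children-welfare prices and levels to make every revealed-preference comparison run one way in time, so the relation is acyclic, GARP holds for any data set, and by Proposition \ref{proposition:identification} no restriction on the expected log production function results. The only differences are cosmetic: the paper picks the prices $P_{jt}$ by backward recursion from $P_{jT}=1$ so that preference can only run from earlier to later observations, whereas you use a geometric welfare profile $W_{jt}=R^{t}$ with a single large parameter $R$ so that it can only run from later to earlier.
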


    \begin{proof}
        I want to show that GARP imposes no restrictions on returns to scale. In other words, I wish to show that there exist personalized prices for public expenditure $\mathcal{P}_{jt}^{i}$, personalized prices for children welfare $P_{jt}^{i}$, and children welfare $W_{jt}$ such that any data set is consistent with every decreasing return to scale. Recall that $a_{jst}^{i} = w_{jt}^{i}(l_{js}^{i} - l_{jt}^{i}) + (q_{js}^{i} - q_{jt}^{i}) + \mathcal{P}_{jt}^{i}(Q_{js} - Q_{jt}) + P_{jt}^{i}(W_{js} - W_{jt})$ and let $X_{jst}^{i}(\mathcal{P}_{jt}^{i}) := w_{jt}^{i}(l_{js}^{i} - l_{jt}^{i}) + (q_{js}^{i} - q_{jt}^{i}) + \mathcal{P}_{jt}^{i}(Q_{js} - Q_{jt})$. Lemma \ref{lemma:RTS} implies $P_{jt}^{i}(W_{js} - W_{jt}) = RTS_{j}^{-1} (E_{js} \frac{P_{jt}^{i}}{P_{js}} - E_{jt} \frac{P_{jt}^{i}}{P_{jt}})$, where $E_{jt} := w_{jt}^{1}h_{jt}^{1} + w_{jt}^{2}h_{jt}^{2} + c_{jt}$. Fix $RTS_j \in (0,1]$, $\mathcal{P}_{jt}^{i}$ to any number that satisfies $\mathcal{P}_{jt}^{1} + \mathcal{P}_{jt}^{2} = 1$, and $P_{jt}^{1} = P_{jt}^{2} = 0.5P_{jt}$ for all $t \in \mathcal{T}$. Next, pick $P_{jT} = 1$ and successively choose
        \begin{equation*}
            P_{jt} > \underset{i \in \{1,2\}, t' > t}{\max} \left\{ - \frac{ E_{jt} P_{jt'} } { 2 RTS_j X_{jtt'}^{i} - E_{jt'} }, 1  \right\} \quad t = T-1, T-2, \dots, 1.
        \end{equation*}

        \noindent
        Observe that $a_{jtt'}^{i} > 0$ for every $t = 1, 2, \dots, T-1$ and every $t' > t$. Hence, $a_{jt't}^{i} < 0$ may only arise if $t' > t$, but then my construction ensures there cannot be a violation of GARP for either household member. Since the choice of $RTS_j$ was arbitrary, GARP imposes no additional restrictions on returns to scale and, as such, no restrictions on the expected log production function.
    \end{proof}

    Proposition \ref{proposition:nonidentification} implies that the model does not give lower bounds on returns to scale. From Proposition \ref{proposition:identification}, it follows that I cannot obtain lower bounds on the slope of the expected log production function.\footnote{It is possible to show that the result also applies with a parametric production function since productivity shocks are unrestricted.} This implies, for example, that I cannot reject that the expected log production function is a constant function. The proof of Proposition \ref{proposition:nonidentification} reveals that the problem is the unboundedness of personalized prices for children welfare. The next result shows that GARP can bound returns to scale if the support of personalized prices for children welfare is compact.
    \begin{proposition} \label{proposition:GARP}
        Suppose Assumptions \ref{Hicksneutrality}-\ref{Variation} hold and the support of personalized prices for children welfare is compact, then GARP may restrict the support of the expected return to scale such that the identified set is nontrivial.
    \end{proposition}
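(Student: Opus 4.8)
The plan is to prove this as a possibility result, in the spirit of Proposition~\ref{proposition:GAPM}: I would exhibit a single data set together with a compact support $[\underline{P},\bar{P}]$ for the personalized children-welfare prices and show that GARP is refutable for some values of $RTS_j$ yet satisfiable for others, so that the model confines $RTS_j$ to a strict subset of $(0,1]$. The lever is Lemma~\ref{lemma:RTS}, which yields $W_{jt}=E_{jt}/(RTS_j P_{jt})$ with $E_{jt}=w_{jt}^{1}h_{jt}^{1}+w_{jt}^{2}h_{jt}^{2}+c_{jt}$. Substituting this into $a_{jst}^{i}=X_{jst}^{i}(\mathcal{P}_{jt}^{i})+P_{jt}^{i}(W_{js}-W_{jt})$ turns the children-welfare term into $\frac{P_{jt}^{i}}{RTS_j}\left(\frac{E_{js}}{P_{js}}-\frac{E_{jt}}{P_{jt}}\right)$, so that $RTS_j$ and the prices enter GARP only through this single channel; crucially, the attainable magnitude of this term scales like $1/RTS_j$ but is now capped, because the prices are confined to a compact set.

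I would then choose the data so that the observable part embeds a two-observation cycle for one member, say member~$1$. Setting $Q_{js}=Q_{jt}$ removes the unknown public-expenditure prices from $X^{1}$, and I would pick leisure and private expenditures producing a price reversal across the two observations, i.e. $X_{jst}^{1}<0$ and $X_{jts}^{1}<0$. Absent any welfare correction this is a GARP violation, so the cycle can be broken only by driving $a_{jst}^{1}>0$ or $a_{jts}^{1}>0$; since the two welfare corrections carry opposite signs, this requires a correction of the correct sign whose magnitude exceeds the data-determined threshold $\min\{-X_{jst}^{1},-X_{jts}^{1}\}$.

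The core of the argument is to compare this threshold with the welfare channel as $RTS_j$ varies. For $RTS_j$ small the factor $1/RTS_j$ is large, so choosing $P_{js},P_{jt}\in[\underline{P},\bar{P}]$ to give $E_{js}/P_{js}-E_{jt}/P_{jt}$ the needed sign makes the correction large enough to repair the cycle; the same price choice orders the observations by $E_{jt}/P_{jt}$ and hence leaves every member's relations acyclic, and since small $RTS_j$ also satisfies GAPM by Proposition~\ref{proposition:GAPM}, the data are rationalized. For $RTS_j$ near $1$, by contrast, $\max_{P\in[\underline{P},\bar{P}]}\left|\frac{P_{jt}^{i}}{RTS_j}\left(\frac{E_{js}}{P_{js}}-\frac{E_{jt}}{P_{jt}}\right)\right|$ is a finite quantity fixed by the data and the support, which I would calibrate to fall strictly below the threshold; then no admissible $(\mathcal{P}_{jt}^{i},P_{jt}^{i},P_{jt})$ removes the cycle and GARP is refuted. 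Thus the admissible returns to scale form a strict subset $(0,\bar{r}]$ with $\bar{r}<1$, and since this holds for every household it restricts the expected return to scale; Proposition~\ref{proposition:identification} then maps this restriction into a strict, and therefore nontrivial, set of expected log production functions.

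The step I expect to be the main obstacle is the universal quantifier in the refutation: I must rule out \emph{every} admissible price vector for the excluded range of $RTS_j$, which demands a uniform bound on the welfare term over the whole compact set. The difficulty is that the sign and the magnitude of this term are coupled through the ratios $E_{jt}/P_{jt}$, so the bound has to dominate the threshold for the binding comparison simultaneously over all feasible prices. Calibrating the data and the support $[\underline{P},\bar{P}]$ so that this uniform bound lies strictly below the threshold at large $RTS_j$, while still leaving enough slack to satisfy GARP for both members and GAPM at small $RTS_j$, is the delicate, example-specific part of the construction.
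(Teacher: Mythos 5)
Your strategy is essentially the paper's: build a two\mbox{-}observation example, substitute $W_{jt}=E_{jt}/(RTS_j P_{jt})$ from Lemma~\ref{lemma:RTS} so the welfare correction becomes $RTS_j^{-1}P_{jt}^{i}\bigl(E_{js}/P_{js}-E_{jt}/P_{jt}\bigr)$, and use the compact price support to cap that correction so a cycle in the observable part $X$ can only be repaired when $RTS_j$ is small. The calibration you flag as the delicate step is in fact achievable: take $E_{js}=E_{jt}=E$, so the correction is uniformly bounded by $RTS_j^{-1}E(\bar{P}/\underline{P}-1)$ over the whole support, and the worst case for the refutation is simply $P_{jt}^{i}=P_{jt}=\bar{P}$; choosing $-X_{jst}^{1}$ and $-X_{jts}^{1}$ above this bound at $RTS_j=1$ then excludes large returns to scale, while the paper instead plugs in explicit numbers with the support $P_{jt}/P_{js}\in[9/10,1]$ and aggregates the two members' relations into $a_{jst}=a_{jst}^{1}+a_{jst}^{2}$ to avoid tracking the split $P_{jt}^{1}+P_{jt}^{2}=P_{jt}$. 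Two shortfalls relative to the paper's proof. First, you deliver only an upper bound on $RTS_j$. That suffices for the literal statement, but the paper's proof deliberately adds a second example (with $X_{jst}^{i}>0$, $X_{jts}^{i}<0$) in which GARP can only hold when $RTS_j$ is \emph{large}; this lower bound is the substantive payoff, since Proposition~\ref{proposition:GAPM} already shows GAPM yields upper bounds and the surrounding text motivates GARP precisely as the additional source of identification that GAPM cannot provide. Second, your claim that small $RTS_j$ makes the data fully rationalizable (GARP for both members plus GAPM) is asserted rather than verified for a concrete data set; the paper is equally brief on this point, but if ``nontrivial'' is to mean nonempty and strict you would need to check it for your calibrated numbers.
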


    \begin{proof}
        If the support of personalized prices for children welfare is compact, then the ratio of prices across time periods is bounded. That is, $\frac{P_{jt}}{P_{js}} \in [\underline{P},\overline{P}]$ for some $0 < \underline{P} \leq \overline{P}$. I need to show that GARP can provide lower and upper bounds on returns to scale under such support constraint. For the sake of concreteness, I assume that the support constraint is given by $\frac{P_{jt}}{P_{js}} \in [\frac{9}{10},1]$. The argument I lay out would apply for a larger support constraint given appropriate scaling of the data. It is worth noting, however, that a larger support constraint reduces the empirical content of GARP. That is, the type of data required to obtain restrictions on returns to scale becomes increasingly ``atypical" as the support is relaxed.
        
        (Upper bound) Let $w_{jt}^{i} = 10$, $w_{js}^{i} = 5$, $l_{jt}^{i} = 20$, $l_{js}^{i} = 10$, $h_{jt}^{i} = 2$, $h_{js}^{i} = 4$, $q_{jt} = 40$, $q_{js} = 100$, $c_{jt} = 10$, $c_{js} = 20$ and, for simplicity, $Q_{js} = Q_{jt}$. The latter is not crucial for my argument since $\mathcal{P}_{jt}^{i} \in (0,1)$. Note that $X_{jst}^{i} = -40$, $X_{jts}^{i} = -10$, $E_{jt} = 60$, and $E_{js} = 50$. Further note that $a_{jst}^{1} < 0$ if and only if $a_{jst}^{2} < 0$ such that if either holds then $a_{jst} := a_{jst}^{1} + a_{jst}^{2} < 0$. Hence, if GARP holds for each household member GARP must also hold for the aggregate revealed preference relation $a_{jst}$. Note that $a_{jst} = -80 + P_{jt}(W_{js} - W_{jt})$ and $a_{jts} = -20 + P_{js}(W_{jt} - W_{js})$. By Lemma \ref{lemma:RTS}, I can express personalized prices for children welfare as
        \begin{equation*}
            W_{jt} = \frac{E_{jt}}{RTS_jP_{jt}}.
        \end{equation*}

        \noindent
        Hence, I can write the aggregate revealed preference relation as
        \begin{align*}
           a_{jst} &= -80 + RTS_{j}^{-1} \left( E_{js} \frac{P_{jt}}{P_{js}} - E_{jt} \right) \\
           a_{jts} &= -20 + RTS_{j}^{-1} \left( E_{jt}\frac{P_{js}}{P_{jt}} - E_{js} \right).
        \end{align*}

        \noindent
        Observe that $a_{jst} < 0$ for all $RTS_j \in (0,1]$. Thus, GARP may only be satisfied if $a_{jts} > 0$, which happens when $RTS_j < 5/6$.
        
        (Lower bound) Let $w_{jt}^{i} = 8$, $w_{js}^{i} = 5$, $l_{jt}^{i} = 20$, $l_{js}^{i} = 10$, $h_{jt}^{i} = 3$, $h_{js}^{i} = 2$, $q_{jt} = 40$, $q_{js} = 150$, $c_{jt} = 40$, $c_{js} = 10$ and, for simplicity, $Q_{js} = Q_{jt}$. Note that $X_{jst}^{i} = 30$, $X_{jts}^{i} = -60$, $E_{jt} = 88$, and $E_{js} = 30$. Similar to before, I can write the aggregate revealed preference relation as
        \begin{align*}
           a_{jst} &= 60 + RTS_{j}^{-1} \left( 30 \frac{P_{jt}}{P_{js}} - 88 \right) \\
           a_{jts} &= -120 + RTS_{j}^{-1} \left( 88 \frac{P_{js}}{P_{jt}} - 30 \right).
        \end{align*}

        \noindent
        Observe that $a_{jts} < 0$ for all $RTS_j \in (0,1]$. Thus, GARP may only be satisfied if $a_{jst} > 0$, which happens when $RTS > \frac{41}{45}$.

         For the sake of simplicity, suppose every household has the same data set. Then, the support of $\mathbb{E}[RTS]$ is a strict subset of $(0,1]$ such that the identified set is nontrivial. 
    \end{proof}

    To gain some intuition on the mechanism by which household members' choices provide information on $RTS$, consider the case where $X_{t_j,t_k}^{i} > 0$ and $X_{t_k,t_j}^{i} > 0$. Without loss of generality, suppose $P_{t_k}^{i}(W_{t_j} - W_{t_k}) \geq 0$ such that $P_{t_j}^{i}(W_{t_k} - W_{t_j}) \leq 0$. Further suppose $a_{t_k,t_j}^{i} \leq 0$ such that an informative upper bound is obtained from the data. Since $a_{t_k,t_j}^{i} \leq 0$, household members prefer the allocation $(l_{t_j}^{i},q_{t_j}^{i},Q_{t_j},W_{t_j})$ over $(l_{t_k}^{i},q_{t_k}^{i},Q_{t_k},W_{t_k})$ in period $t_j$. Since $a_{t_k,t_j}^{i} \leq 0$ despite $X_{t_k,t_j}^{i} > 0$, it must be that $(l_{t_j}^{i},q_{t_j}^{i},Q_{t_j},W_{t_j})$ is preferred to $(l_{t_k}^{i},q_{t_k}^{i},Q_{t_k},W_{t_k})$ because children welfare is sufficiently more enticing. Children welfare is more enticing if it gives a higher marginal utility or when $P_{t_j}^{i}$ is large, but this exactly occurs when $RTS$ is not too large.

    \begin{remark}
        In practice, it may be necessary to bound the support of the latent variables such as in my own implementation. In such case, GARP naturally provides meaningful though possibly mild restrictions on the production technology. As Proposition \ref{proposition:GARP} makes clear, the analyst can obtain stronger restrictions on the production technology if he has prior knowledge about the support of the latent variables.
    \end{remark}

    \section{Empirical Specification} \label{section:specification}

    The previous section showed that the model nonparameterically partially identifies the production function. To improve the interpretability of the empirical results, I now specialize the production function to a Cobb-Douglas technology.
    \begin{assumption} \label{Cobb-Douglas}
        The production function is Cobb-Douglas such that        \begin{equation*}
             W_{jt} = (h_{jt}^{1})^{\alpha_{j1}} (h_{jt}^{2})^{\alpha_{j2}} (c_{jt})^{\alpha_{j3}}e^{\epsilon_{jt}}.
    \end{equation*}
    \end{assumption}

    The Cobb-Douglas technology is a natural choice as it is homogeneous of degree $RTS = \alpha_{j1} + \alpha_{j2} + \alpha_{j3}$. Furthermore, it is easy to see that the output elasticities are given by
    \begin{align*}
        \alpha_{j1} &= \frac{RTS_j w_{jt}^1h_{jt}^1}{w_{jt}h_{jt}^1 + w_{jt}h_{jt}^2 + c_{jt}} \\
        \alpha_{j2} &= \frac{RTS_j w_{jt}^2 h_{jt}^2}{w_{jt}h_{jt}^1 + w_{jt}h_{jt}^2 + c_{jt}} \\
        \alpha_{j3} &= \frac{RTS_j c_{jt}}{w_{jt}h_{jt}^1 + w_{jt}h_{jt}^2 + c_{jt}}.
    \end{align*}

    \noindent
    In words, the model implies that each output elasticity equates a fraction $RTS$ of its share of total children expenditure. These shares are constant in time, regardless of changes in the shadow price of children welfare, $P_{jt}$. The next result warns against ignoring productivity shocks in the model.
    \begin{claim} \label{claim2}
        Suppose Assumptions \ref{Hicksneutrality}-\ref{Cobb-Douglas} hold. If productivity shocks are ignored, then the data may erroneously reject the model at the true return to scale.
    \end{claim}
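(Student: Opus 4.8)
The plan is to construct a two-period data set that the full model rationalizes at the true return to scale but that the ``no-shock'' restriction rejects; I drop the household index $j$ throughout. The key observation is that Lemma \ref{lemma:RTS} forces $P_{t}W_{t}=E_{t}/RTS$, where $E_{t}:=w_{t}^{1}h_{t}^{1}+w_{t}^{2}h_{t}^{2}+c_{t}$, so under Assumption \ref{Cobb-Douglas} the effective price $P_{t}e^{\epsilon_{t}}=E_{t}/(RTS\,F_{t})$, with $F_{t}=(h_{t}^{1})^{\alpha_{1}}(h_{t}^{2})^{\alpha_{2}}(c_{t})^{\alpha_{3}}$, does not depend on $\epsilon_{t}$. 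Because the cost-share identities make this effective price the supporting gradient of the concave Cobb-Douglas map, GAPM holds whether or not shocks are present, so any erroneous rejection must come from GARP. The shock's only role is to decouple welfare $W_{t}=F_{t}e^{\epsilon_{t}}$ from production $F_{t}$: holding $P_{t}W_{t}$ fixed, raising $\epsilon_{t}$ scales $W_{t}$ up and $P_{t}$ down, so the price ratio $P_{t}/P_{s}$ is a free parameter with shocks but is pinned to $r_{0}:=E_{t}F_{s}/(E_{s}F_{t})$ once one sets $\epsilon\equiv 0$ and hence $W_{t}=F_{t}$.

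First I would record the no-shock welfare contribution to the aggregate revealed preference relation. At the pinned $P_{t}$ the period-$t$ term is $\Omega_{st}:=P_{t}(W_{s}-W_{t})=\frac{E_{t}}{RTS}(F_{s}/F_{t}-1)$ and the reverse term is $\Omega_{ts}=-\Omega_{st}/r_{0}$, so the two carry opposite signs whenever $F_{s}\neq F_{t}$. I then choose the production coordinates in the two periods to have constant cost shares, so Assumption \ref{Cobb-Douglas} holds with time-invariant $\alpha$, and with $F_{s}/F_{t}$ just above $1$, which makes $\Omega_{st}>0$ small and $\Omega_{ts}<0$. Using market work $b^{i}$ and nonlabor income $y$ to absorb the time and budget identities, I set member $1$'s private coordinates so that $b_{st}^{1}:=w_{t}^{1}(l_{s}^{1}-l_{t}^{1})+(q_{s}^{1}-q_{t}^{1})$ and its reverse $b_{ts}^{1}$ satisfy $b_{st}^{1}<-\Omega_{st}$ and $b_{ts}^{1}<0$; this is feasible since $b_{st}^{1}+b_{ts}^{1}=(w_{t}^{1}-w_{s}^{1})(l_{s}^{1}-l_{t}^{1})$ can be made negative by a wage change, after which the private expenditure gap places each term below its target. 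Member $2$'s private part is left strictly positive.

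The rejection is then a split-robustness statement. Writing member $1$'s welfare shares as $\theta_{t}^{1},\theta_{s}^{1}\in(0,1)$ and taking $Q_{s}=Q_{t}$ to neutralize the public term, we have $a_{st}^{1}=b_{st}^{1}+\theta_{t}^{1}\Omega_{st}$ and $a_{ts}^{1}=b_{ts}^{1}+\theta_{s}^{1}\Omega_{ts}$; maximizing over all admissible personalized prices gives $\sup a_{st}^{1}=b_{st}^{1}+\Omega_{st}<0$ and, since $\Omega_{ts}<0$, $\sup a_{ts}^{1}=b_{ts}^{1}<0$. Member $1$ is therefore caught in a strict two-period cycle for every Lindahl split, so no rationalization exists and the model is rejected at the true $RTS$. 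Reinstating the shocks, I take $\epsilon_{s}-\epsilon_{t}$ large, so that $r:=P_{t}/P_{s}$ and the period-$t$ welfare term $P_{t}(W_{s}-W_{t})=\frac{1}{RTS}(rE_{s}-E_{t})$ grow without bound; assigning almost all of it to member $1$ and an $O(1)$ remainder to member $2$ makes $a_{st}^{1}>0$ and $a_{st}^{2}>0$, so neither member cycles and GARP is restored. Since GAPM is untouched, the same data is rationalized with shocks at the same $RTS$, establishing the erroneous rejection.

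The step I expect to be the main obstacle is exactly this split-robustness: the personalized prices $\mathcal{P}_{t}^{i}$ and $P_{t}^{i}$ remain free even after $\epsilon\equiv 0$ pins down $W_{t}$ and $P_{t}$, so I must ensure that no reallocation of the small, sign-fixed welfare and public terms can rescue member $1$ in either direction, while the very same dataset, once the shock enlarges the welfare term, does admit a reallocation that rescues both members simultaneously. Reconciling ``$\Omega_{st}$ small enough to be dominated by the private cycle'' with ``constant cost shares and $F_{s}/F_{t}>1$'' and with ``rationalizable once shocks reopen $r$'' is the delicate part; the remaining pieces, namely the tangent-inequality verification for GAPM and the time and budget identities, are routine.
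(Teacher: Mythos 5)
Your proposal is correct and follows essentially the same route as the paper: both arguments rest on Lemma \ref{lemma:RTS} together with the observation that setting $\epsilon_t \equiv 0$ pins down $\widetilde{W}_t$ and hence $\widetilde{P}_t$, so that the consumer-side Afriat inequalities become refutable while the production side (GAPM) is unaffected. The only difference is one of completeness: where the paper asserts that the resulting Afriat inequalities ``can be rejected,'' you supply the explicit two-period construction (the sign-opposed welfare terms $\Omega_{st}$, $\Omega_{ts}$, the private-expenditure cycle for member 1 robust to all Lindahl splits, and the restoration of GARP by freeing the price ratio $P_t/P_s$ via $\epsilon_s-\epsilon_t$), which checks out.
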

    
    \begin{proof}
        In what follows, I remove the $j$ subscript from the variables. Suppose the data are rationalized by the model at the true return to scale $RTS^0 \in (0,1]$ and the true children welfare $W_{t} = (h_{t}^{1})^{\alpha_{1}} (h_{t}^{2})^{\alpha_{2}} (c_{t})^{\alpha_{3}}e^{\epsilon_t}$. Suppose now the econometrician ignores productivity shocks and assumes
        \begin{equation*}
            \widetilde{W}_{t} = (h_{t}^{1})^{\alpha_{1}} (h_{t}^{2})^{\alpha_{2}} (c_{t})^{\alpha_{3}}.
        \end{equation*}
    
        \noindent
        Conditional on $RTS^{0}$, the output elasticities are identified. Therefore, children welfare is also identified. From the first-order conditions of the model and Lemma \ref{lemma:RTS}, I have
        \begin{equation*}
            \widetilde{P}_t = \frac{E_t}{RTS^0 \widetilde{W}_{t}}.
        \end{equation*}
    
        \noindent
        Since $\widetilde{W}_t$ is identified, it follows that $\widetilde{P}_t$ is also identified. It is then obvious that the Afriat inequalities
        \begin{equation*} \label{AI:misspecification}
        U_{s}^{i} - U_{t}^{i} \leq \lambda_{t}^{i} \left[ w_{t}^{i} (l_{s}^{i} - l_{t}^{i}) + (q_{s}^{i} - q_{t}^{i}) + \mathcal{P}_{t}^{i}(Q_s - Q_t) + \widetilde{P}_t^{i}(\widetilde{W}_s - \widetilde{W}_t) \right]
    \end{equation*}

    \noindent
    can be rejected by the data even if the data are consistent with the Afriat inequalities under the correct specification of the production function.
    \end{proof}

    Since output elasticities are a function of returns to scale, Claim \ref{claim2} implies that ignoring productivity shocks may lead to inconsistent output elasticities.\footnote{More generally, productivity shocks are useful as they may absorb omitted variables that could otherwise bias the output elasticities.} The next result shows that the first-order conditions of the model have empirical bite under the Cobb-Douglas specification.
    \begin{claim} \label{claim}
        Suppose Assumptions \ref{Hicksneutrality}-\ref{Cobb-Douglas} hold. The first-order conditions of the model are refutable independently of returns to scale.
    \end{claim}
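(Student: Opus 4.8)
The plan is to use the input first-order conditions under the Cobb-Douglas specification to show that the model forces the factor-expenditure ratios to be constant over time, and then to exhibit data violating this restriction for every admissible return to scale. Dropping the household subscript $j$, I would first specialize the first-order conditions derived in the proof of Proposition~\ref{proposition:identification}. Since $f(h^1,h^2,c) = \alpha_1 \log h^1 + \alpha_2 \log h^2 + \alpha_3 \log c$, one has $\pdv{f}{h^1} = \alpha_1/h_t^1$, $\pdv{f}{h^2} = \alpha_2/h_t^2$, and $\pdv{f}{c} = \alpha_3/c_t$. Substituting into $\pdv{f}{h^1} = w_t^1/(P_t W_t)$ and its analogues yields, for every $t \in \mathcal{T}$,
\[
\alpha_1 = \frac{w_t^1 h_t^1}{P_t W_t}, \qquad \alpha_2 = \frac{w_t^2 h_t^2}{P_t W_t}, \qquad \alpha_3 = \frac{c_t}{P_t W_t}.
\]

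Next I would eliminate the common unobservable $P_t W_t$ by taking ratios across these conditions at a fixed $t$, obtaining $\alpha_1/\alpha_3 = w_t^1 h_t^1/c_t$ and $\alpha_2/\alpha_3 = w_t^2 h_t^2/c_t$ for every $t$. Because the output elasticities $(\alpha_1,\alpha_2,\alpha_3)$ are time-invariant parameters, the model requires these factor-expenditure ratios to be equal across all observations; the price $P_t$ and shock $\epsilon_t$ cannot absorb any time variation, since $P_t W_t$ enters all three conditions symmetrically and hence cancels in the ratio. I would then construct a two-period data set, for instance $w_1^1 = h_1^1 = c_1 = 1$ and $w_2^1 = h_2^1 = 2$, $c_2 = 1$ (with the remaining variables set to arbitrary positive values), so that $w_t^1 h_t^1/c_t$ equals $1$ at $t=1$ and $4$ at $t=2$. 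No triple $(\alpha_1,\alpha_2,\alpha_3)$ can satisfy the displayed equalities simultaneously, so the input first-order conditions---and therefore the model's first-order conditions---are refuted.

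Finally, I would note that the binding restriction $\alpha_1/\alpha_3 = w_t^1 h_t^1/c_t$ involves only ratios of elasticities, which are invariant to the overall scale $RTS = \alpha_1 + \alpha_2 + \alpha_3$. Consequently the refutation does not single out any particular value of returns to scale: no admissible $(\alpha_1,\alpha_2,\alpha_3)$, and hence no $RTS \in (0,1]$, can rationalize such data. The one point requiring care is precisely this scale-freeness---the claim is not that some fixed $RTS$ fails, but that the restriction ruling out the data is independent of $RTS$---and it is exactly what makes the statement stronger than mere refutability at a known return to scale.
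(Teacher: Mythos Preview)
Your proof is correct and follows essentially the same approach as the paper: both derive the Cobb--Douglas first-order conditions $\alpha_k = (\text{factor expenditure})/(P_tW_t)$, eliminate the common unobservable $P_tW_t$, and observe that the resulting restrictions on observables are scale-free and hence refutable independently of $RTS$. The only cosmetic difference is that the paper eliminates $P_tW_t$ by taking ratios of the same input's expenditure across two periods (so that $w_{2}^{1}h_{2}^{1}/w_{1}^{1}h_{1}^{1} = w_{2}^{2}h_{2}^{2}/w_{1}^{2}h_{2}^{2} = c_{2}/c_{1}$), whereas you take ratios across inputs within a period (so that $w_t^1h_t^1/c_t$ is time-invariant); these are equivalent testable implications, and your route is arguably slightly more direct since it does not pass through Lemma~\ref{lemma:RTS}.
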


    \begin{proof}
    By Lemma \ref{lemma:RTS}, I have
    \begin{align*}
        P_{jt}W_{jt} = RTS_{j}^{-1}(w_{jt}^{1}h_{jt}^{1} + w_{jt}^{2}h_{jt}^{2} + c_{jt}) \quad \forall t \in \mathcal{T},
    \end{align*}

    \noindent
    where $RTS \in (0,1]$. For the sake of simplicity, suppose there are only two time periods. As such, I have
    %
    \begin{align*}
        \frac{P_{j2} W_{j2}}{P_{j1} W_{j1}} &= \frac{(w_{j2}^{1}h_{j2}^{1} + w_{j2}^{2}h_{j2}^{2} + c_{j2})}{(w_{j1}^{1}h_{j1}^{1} + w_{j1}^{2}h_{j1}^{2} + c_{j1})}.
    \end{align*}

    \noindent
    Since output elasticities are time invariant, it must be that the following set of equations holds
    \begin{align*}
        \frac{P_{j2} W_{j2}}{P_{j1} W_{j1}} &= \frac{w_{j2}^{1}h_{j2}^{1}}{w_{j1}^{1}h_{j1}^{1}} \\
        \frac{P_{j2} W_{j2}}{P_{j1} W_{j1}} &= \frac{w_{j2}^{2}h_{j2}^{2}}{w_{j1}^{2}h_{j1}^{2}} \\
        \frac{P_{j2} W_{j2}}{P_{j1} W_{j1}} &= \frac{c_{j2}}{c_{j1}}.
    \end{align*}

    \noindent
    Note that these equations do not depend on returns to scale. Furthermore, they can easily be violated such as with $w_{j2}^1h_{t2}^1 = 1/2$ and $w_{j2}^2h_{t2}^2 = c_{j2} = 1/4$.
    \end{proof}
    
    Claim \ref{claim} shows that the first-order conditions have meaningful implications that can be tested in the data given a Cobb-Douglas specification. The previous nonparametric results further guarantee that returns to scale and thus the output elasticities are restricted.

    \subsection{Measurement Error}
    
    Claim \ref{claim} shows the the model implies a set of overidentifying restrictions on output elasticities. Hence, any measurement error in the inputs, however small, would lead to the erroneous rejection of the model. It follows that any test of the model that does not address this issue would be dubious in my framework. For this reason, I impose mild centering conditions on measurement error. Let $m_{t}^{x} := x_t - x_{t}^{\star}$ denote the difference between the observed and true value of a variable $x_t$ in period $t$.
    \begin{assumption} \label{Assumption:ME}
        $\mathbb{E}[m_{t}^{x}] = 0$, where $x \in \{h^{1},h^{2},c\}$, $t = 1, 2, \dots, T$.
    \end{assumption}

    Assumption \ref{Assumption:ME} requires that observed inputs be consistent with the true inputs on average in the cross-section. Note that I do not require the distribution of measurement error to be parametric or to be identical over time. An indirect benefit of introducing measurement error in inputs is that I will be able to keep households with missing inputs in the application. Further details about the data are discussed in Section \ref{Section:Data}.

    %

    \section{Testing and Estimation} \label{Section:Estimation}

    This section presents the statistical framework used for testing the model and making inference on the production function.

    \subsection{Testing}
    
    Let $\zeta_j := \{(U_{jt}^{i}, \lambda_{jt}^{i}, \mathcal{P}_{jt}^{i},P_{jt}^{i}, W_{jt}, \bm{\alpha}_{jt}, \omega_{jt}, m_{jt}^{x})_{i \in \{1,2\}, x \in \{h^1,h^2,c\}}\}_{t \in \mathcal{T}} \in Z|\mathcal{D}$ denote the set of household-specific latent variables in the model, where $Z$ denote the support of the latent variables and $\mathcal{D}$ denote the support of the data. The revealed preference characterization along with the moment conditions can be used to define the statistical rationalizability of a panel data set $D := \{D_{j}\}_{j \in \mathcal{N}}$. To this end, write the constraints of the model in the form of moment functions:
    \begin{align*}
        g_{ist}^{U}(D_j,\zeta_j) &:= \bm{\mathbbm{1}} \Bigg( U_{js}^{i} - U_{jt}^{i} \leq \lambda_{jt}^{i} \Big[ w_{jt}^{i} (l_{js}^{i} - l_{jt}^{i}) + (q_{js}^{i} - q_{jt}^{i}) + \\
        &+ \mathcal{P}_{jt}^{i}(Q_{js} - Q_{jt}) + P_{jt}^{i}(W_{js} - W_{jt}) \Big] \Bigg) - 1 \\
        g_{t}^{\alpha_1}(D_j,\zeta_j) &:= \bm{\mathbbm{1}} \Bigg( \alpha_{jt1} = \frac{w_{jt}^{1}h_{jt}^{1}}{P_{jt}W_{jt}} \Bigg) - 1 \\
        g_{t}^{\alpha_2}(D_j,\zeta_j) &:= \bm{\mathbbm{1}} \Bigg( \alpha_{jt2} = \frac{w_{jt}^{2}h_{jt}^{2}}{P_{jt}W_{jt}} \Bigg) - 1 \\
        g_{t}^{\alpha_3}(D_j,\zeta_j) &:= \bm{\mathbbm{1}} \Bigg( \alpha_{jt3} = \frac{c_{jt}}{P_{jt}W_{jt}} \Bigg) - 1 \\
        g_{t}^{W}(D_j,\zeta_j) &:= \bm{\mathbbm{1}} \Bigg( W_{jt} = (h_{jt}^{1})^{\alpha_{j1}}(h_{jt}^{2})^{\alpha_{j2}}(c_{jt})^{\alpha_{j3}}e^{\epsilon_{jt}} \Bigg) - 1 \\
        g_{xt}^{m}(D_j,\zeta_j) &:= m_{jt}^{x},
    \end{align*}

    \noindent
    where $\mathbbm{1}(\cdot)$ denote the indicator function and equates $1$ if the expression inside the parenthesis is satisfied and $0$ otherwise. The latent variables further need to satisfy their support constraints such that $q_{jt}^{1} + q_{jt}^{2} = q_{jt}$ and $\mathcal{P}_{jt}^{1} + \mathcal{P}_{jt}^{2} = 1$. 
    
    Note that I let output elasticities vary in time in the moment functions $g_{t}^{\alpha_k}(D_j,\zeta_j)$. This guarantees that the equations for the output elasticities can be satisfied in every period. To obtain time invariant output elasticities, I require their expected variance to be zero:
    \begin{equation*}
        \mathbb{E}[\bm{g}^{v}(D_j,\zeta_j)] = 0,
    \end{equation*}

    \noindent
    where $\bm{g}^{v}(D_j,\zeta_j) := var(\bm{\alpha})$. Since the variance is always positive, those moment conditions are satisfied if and only if the variance is zero for all households. As such, this formulation is equivalent to directly imposing that production parameters are time invariant.\footnote{This is formally proven in \cite{AK2021}.}
    
    In what follows, I let $\bm{g}(D_j,\zeta_j)$ denote the vector of all moment functions, $\bm{g}^{(m,v)}(D_j,\zeta_j) := (\bm{g}^{m}(D_j,\zeta_j)',\bm{g}^{v}(D_j,\zeta_j)')'$ denote the set of moment functions on measurement error and variance of output elasticities, and $\bm{g}^{-(m,v)}(D_j,\zeta_j) := (\bm{g}^{U}(D_j,\zeta_j)', \bm{g}^{\alpha}(D_j,\zeta_j)', \bm{g}^{W}(D_j,\zeta_j)')'$ denote its complement.
    \begin{definition}
        Under Assumptions \ref{Hicksneutrality}-\ref{Assumption:ME}, a data set $D$ is statistically rationalizable if
        \begin{equation*}
            \underset{ \mu \in \mathcal{M}_{\mathcal{Z}|\mathcal{D}} }{\inf} \Vert \mathbb{E}_{\mu \times \pi_0} [\bm{g}(D,\zeta)] \Vert = 0,
        \end{equation*}
        where $\mathcal{M}_{Z|\mathcal{D}}$ is the set of all conditional probability distributions on $Z|\mathcal{D}$ and $\pi_0 \in \mathcal{M}_{\mathcal{X}}$ is the observed distribution of $D$.
    \end{definition}

    In its current form, the notion of statistical rationalizability has $2T^2 + T + T + T + T + 3T + T$ moment conditions, including some that are discontinuous. Let $d_m$ denote the number of moment conditions on measurement error and $d_v$ denote the number of moment conditions on the variance of the output elasticities. The following result due to \cite{Schennach2014} and \cite{AK2021} allows us to considerably reduce the complexity of the problem.
    \begin{proposition} \label{propositionELVIS}
    Under Assumptions \ref{Hicksneutrality}-\ref{Assumption:ME}, a data set $D$ is statistically rationalizable if and only if 
    \begin{equation*}
        \underset{\bm{\gamma} \in \mathbb{R}^{d_m + d_v}}{\min} \Vert \mathbb{E}_{\pi_0}[\bar{\bm{g}}(D;\bm{\gamma})] \Vert = 0,
    \end{equation*}
    where 
    {\footnotesize
    \begin{equation*}
        \bar{g}_j(D_j;\bm{\gamma}) := \frac{ \int_{\zeta_j \in Z|\mathcal{D}} \bm{g}_{j}^{(m,v)}(D_j,\zeta_j) \exp(\bm{\gamma}' \bm{g}_{j}^{(m,v)}(D_j,\zeta_j)) \bm{\mathbbm{1}}(\bm{g}_{j}^{-(m,v)}(D_j,\zeta_j) = 0) \,d\eta(\zeta_j|D_j) }{ \int_{\zeta_j \in Z|\mathcal{D}} \exp(\bm{\gamma}' \bm{g}_{j}^{(m,v)}(D_j,\zeta_j)) \bm{\mathbbm{1}}( \bm{g}_{j}^{-(m,v)}(D_j,\zeta_j) = 0) \,d\eta(\zeta_j|D_j)},
    \end{equation*}
    }
    \noindent
    and $\eta(\cdot|D_j)$ is an arbitrary user-specified distribution supported on $Z|\mathcal{D}$ such that $\mathbb{E}_{\pi_{0}}[ \log(\mathbb{E}_{\eta}[\exp(\bm{\gamma}' \bm{g}^{(m,v)}(D,\zeta))|D])]$ exists and is twice continuously differentiable in $\bm{\gamma}$ for all $\bm{\gamma} \in \mathbb{R}^{d_m+d_v}$.
    \end{proposition}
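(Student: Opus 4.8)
The plan is to recognize this as an application of the ELVIS representation of \cite{Schennach2014}, adapted to shape-constrained models by \cite{AK2021}, and to verify that the split of the moment conditions into a ``hard'' support-defining block and a ``soft'' tilted block is legitimate here. Throughout, the key structural observation is that the components of $\bm{g}^{-(m,v)} = (\bm{g}^U,\bm{g}^\alpha,\bm{g}^W)$ are each of the form $\mathbbm{1}(\cdot)-1$ and hence take values in $\{-1,0\}$, so they are pointwise nonpositive, whereas the components of $\bm{g}^{(m,v)} = (\bm{g}^m,\bm{g}^v)$ are continuous in $\zeta$.

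First I would use this nonpositivity. Since each coordinate of $\bm{g}^{-(m,v)}$ is $\leq 0$ pointwise, $\mathbb{E}_{\mu\times\pi_0}[\bm{g}^{-(m,v)}]=0$ can hold only if every such coordinate vanishes $\mu\times\pi_0$-almost surely, i.e. if the conditional law $\mu$ places all its mass on the feasible set $\{\zeta : \bm{g}_j^{-(m,v)}(D_j,\zeta)=0\}$. Therefore statistical rationalizability is equivalent to the existence of a conditional distribution $\mu$ supported on this feasible set such that the remaining, continuous moments match in expectation, $\mathbb{E}_{\mu\times\pi_0}[\bm{g}^{(m,v)}]=0$. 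This is precisely the form to which the integrated-moment result applies, with the combinatorial constraints absorbed into the support of the reference measure through the indicator $\mathbbm{1}(\bm{g}_j^{-(m,v)}=0)$, and leaving only the $d_m+d_v$ continuous equality moments to be matched by tilting.

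Second I would invoke the convex-duality heart of ELVIS. For each tilting vector $\bm{\gamma}\in\mathbb{R}^{d_m+d_v}$, the exponentially tilted kernel $d\mu_{\bm{\gamma}}(\zeta_j\mid D_j)\propto \exp(\bm{\gamma}'\bm{g}_j^{(m,v)})\,\mathbbm{1}(\bm{g}_j^{-(m,v)}=0)\,d\eta(\zeta_j\mid D_j)$ is the minimizer of the relative entropy to $\eta(\cdot\mid D_j)$ among all feasible conditional laws attaining a prescribed conditional mean of $\bm{g}^{(m,v)}$, and its conditional mean is exactly $\bar{\bm{g}}_j(D_j;\bm{\gamma})$. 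Writing $K(\bm{\gamma}) := \mathbb{E}_{\pi_0}[\log \mathbb{E}_\eta[\exp(\bm{\gamma}'\bm{g}^{(m,v)})\mathbbm{1}(\bm{g}^{-(m,v)}=0)\mid D]]$, the stated regularity condition makes $K$ finite and twice continuously differentiable with $\nabla K(\bm{\gamma}) = \mathbb{E}_{\pi_0}[\bar{\bm{g}}(D;\bm{\gamma})]$. Since $K$ is convex, the program $\min_{\bm{\gamma}} \Vert \nabla K(\bm{\gamma})\Vert$ attains $0$ if and only if $K$ has an interior stationary point, and by the duality between maximum-entropy tilting and moment matching this occurs if and only if some feasible $\mu$ achieves $\mathbb{E}_{\mu\times\pi_0}[\bm{g}^{(m,v)}]=0$. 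Combined with the first step, this yields the claimed equivalence.

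The main obstacle, and the reason the cited theorems are needed rather than an elementary argument, is the attainment/boundary issue: a feasible $\mu$ from the original problem need not itself be an exponential tilt of $\eta$, and one must rule out the minimizing $\bm{\gamma}$ escaping to infinity, which would correspond to the mean-zero point lying on the boundary of the convex hull of the support of $\bm{g}^{(m,v)}$. This is exactly where the assumed existence and twice-differentiability of $K$ does the work, guaranteeing finiteness of the moment generating function of $\bm{g}^{(m,v)}$ under $\eta$ and hence that the convex program is well-posed with optimality condition $\nabla K = 0$. A secondary point to verify is that the discontinuity of $\bm{g}^{-(m,v)}$ does not disturb smoothness of $K$ in $\bm{\gamma}$; this holds because those coordinates enter only through the fixed support-defining indicator and never through the tilt, so $\bm{\gamma}\mapsto \exp(\bm{\gamma}'\bm{g}^{(m,v)})$ is smooth on the restricted support.
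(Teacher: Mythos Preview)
The paper does not supply its own proof of this proposition: it is stated as ``due to \cite{Schennach2014} and \cite{AK2021}'' and is followed only by commentary, not an argument. Your proposal therefore goes further than the paper does, by sketching the mechanics behind the cited ELVIS result rather than merely invoking it.

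Your sketch is a faithful outline of that machinery. The first step---observing that each coordinate of $\bm{g}^{-(m,v)}$ equals $\mathbbm{1}(\cdot)-1\in\{-1,0\}$, hence is nonpositive, so a zero mean forces $\mu$ to concentrate on the feasible set---is exactly the device \cite{AK2021} use to convert indicator constraints into support restrictions (the paper alludes to this in a footnote when discussing the variance moments). The second step---exponential tilting over the restricted support, with $\nabla K(\bm{\gamma})=\mathbb{E}_{\pi_0}[\bar{\bm{g}}(D;\bm{\gamma})]$ and convex duality delivering the equivalence---is the content of \cite{Schennach2014}. Your discussion of the boundary/attainment issue and of why the discontinuous indicators do not spoil smoothness in $\bm{\gamma}$ is also to the point. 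In short, there is nothing to compare against in the paper itself; your write-up is a correct high-level reconstruction of the cited results, which is more than the paper attempts.
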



    The previous result calls for some comments. First, the dimensionality of the problem is greatly reduced as it only requires finding a finite dimensional parameter $\bm{\gamma}$ rather than a distribution $\mu$. Second, the moment conditions associated with the concavity of the utility functions, first-order conditions, and production function equations are directly imposed on each household data set such as to restrict the support of the unobservables. In particular, observe that the optimization problem no longer includes any discontinuous moment condition. Finally, it is worth noting that the result states that there is no loss in generality in averaging out the unobservables in the moment functions provided the distribution is from the exponential family.

    The simplification allowed by Proposition \ref{propositionELVIS} requires finding unobservables $\zeta_j$ that exactly satisfy the concavity of the utility functions, first-order conditions, and production function equations. If the constraints were linear in the unobservables, it would be possible to use a standard Hit-and-Run algorithm to directly sample them from the feasible space defined by the intersection of the inequalities and the system of equations. Unfortunately, the inequalities are highly nonlinear, therefore making this approach impossible.\footnote{In principle, it would be possible to use rejection sampling along with a mixed-integer programming (MIP) problem to draw from the feasible space. However, these types of MIP for collective models are NP-complete (\citeauthor{Nobibon2016}, \citeyear{Nobibon2016}) so they do not scale well. Also, rejection sampling is generally slow.}

    I resolve this pervasive issue by proposing a blocked Gibbs sampler. The idea is to break down the sampling procedure into multiple blocks, where each block takes a subset of all unobservables as given. The key is to create those blocks in such a way that the inequalities are linear in the unobservables conditional on a certain subset of all unobservables. Thus, the inequalities effectively define a (conditional) convex polytope in each block. This allows for a straightforward sampling procedure that guarantees the unobservables to exactly satisfy the inequalities, first-order conditions, and production function equations. The details of the algorithm are provided in the Appendix.

    \subsection{Inference}

    One of the advantages of ELVIS is that testing and inference are quite simple even if the model is partially identified. Indeed, testing the model can be done by constructing the sample analogues of the averaged moments and by computing a test statistic that is stochastically bounded by the chi-square distribution. Inference is achieved by further adding moment conditions on parameters of interest and inverting the test statistic. Since the test statistic is stochastically bounded by the chi-square distribution, it suffices to compare the value of the test statistic against the chi-square critical value with $d_m + d_v$ ($d_m + d_v + d_\theta$) degrees of freedom for testing (inference). Importantly, the identified set is convex under mild conditions.\footnote{I refer the reader to \cite{AK2021} for additional details about the statistical procedure.}

    \section{Data} \label{Section:Data}

    I conduct my empirical analysis with the Longitudinal Internet Studies for the Social Sciences (LISS) panel data. The panel consists of about $5000$ households representative of the Dutch population and gathers information about panelists yearly. Since the LISS data directly include information on private expenditures within the household, an important point of departure from the model is that private expenditures $q_{t}^{i}$ are observed.
    
    The time use data were collected by means of survey questions about the time spent on a set of time use categories during the past seven days. Although the survey is not demanding of household members memory, the actual time allocations throughout the month are likely to differ from the ones reported at the time of the survey. Similarly, data on monthly expenditures were collected via survey questions. Additional details relating to data collection can be found in \cite{Cherchye2012}.
    
    
    Since my main goal is to estimate the production function, I only consider measurement error in inputs. Nevertheless, I observe that my methodology could accommodate measurement error in other variables. This choice also explicitly recognizes that the overidentifying restrictions implied by the Cobb-Douglas specification may not hold perfectly due to the presence of measurement error. Indeed, there is empirical evidence documenting that mean expenditures are correctly reported in survey data \citep{KolsrudME2017,AbildgrenME2018}, which motivates the restriction of mean zero measurement error. The reader is referred to Section \ref{section:specification} for details on the specification of the production function and the restrictions on measurement error.

    My empirical analysis focuses on couples with children. This restriction alone reduces the number of households to about a thousand. I further restrict the set of observations with nonmissing and nonzero wages, private expenditures, and public expenditures. For households with missing or zero data on inputs, I impute their values. Note that the treatment of measurement error explicitly handles the imperfection of the imputation. Lastly, I restrict the sample to households that are in the panel for three periods.
    
    The final sample consists of $132$ couples with children observed over $3$ time periods pooled from the years $2008$ to $2017$.\footnote{Ideally, it would be desirable for the sample size to be larger for the asymptotic theory to fully apply. This limitation of the data set warrants future efforts to increase the sample size such as by including households with missing wages.} While the sample size is relatively small, I note that it is comparable with \cite{Cherchye2012} despite my restriction to panel data. Summary statistics of the sample are displayed in Table \ref{tab:data}. Further details about the sample construction are given in the Appendix.


\begin{center}
    \captionof{table}{Sample Summary Statistics} \label{tab:data}
    \begin{adjustbox}{width=1.0 \textwidth,center=\textwidth}
       \begin{threeparttable}[b]
    \begin{tabular}{ p{2mm} l c c c c c c}
        \toprule
         & &  \multicolumn{2}{c}{Husband} & \multicolumn{2}{c}{Wife} & \multicolumn{2}{c}{Household} \\
        & & Mean & Std dev. & Mean & Std dev. & Mean & Std dev.\\
        \hline
        \multicolumn{2}{l}{Age} & 46.25 & 7.99 & 44.08 & 7.28 \\
        \multicolumn{2}{l}{Wage (EUR/hour)} &  13.68 & 6.65 & 12.95 & 12.00 \\
        \multicolumn{2}{l}{Number of children} & & & & & 2.00 & 0.83 \\
        \multicolumn{2}{l}{Mean age of children} & & & & & 13.15 & 6.36 \\
        \multicolumn{2}{l}{Childcare (hours/week)} & 10.41 & 7.89 & 16.92 & 13.14 \\
        \multicolumn{2}{l}{Work (hours/week)} & 37.50 & 6.03 & 23.30 & 8.27 \\
        \multicolumn{2}{l}{Private expenditure (EUR/month)} & 362.00 & 229.23 & 395.81 & 323.30 \\
        \multicolumn{2}{l}{Public expenditure (EUR/month)} &  &  & & & 2164.85 &  836.12 \\
        \multicolumn{2}{l}{Children expenditure (EUR/month)} &  &  & & & 512.52 & 512.90 \\
        \toprule
        \multicolumn{2}{l}{Total households} & & & & & & 132 \\
        \bottomrule 
    \end{tabular}
      \end{threeparttable}
    \end{adjustbox}
\end{center}

    \section{Empirical Results} \label{Section:Results}

    This section recovers confidence sets on expected returns to scale and expected output elasticities. Then, it investigate how the production function changes with demographics.

    \subsection{Expected Parameters}
    
    I begin the empirical analysis by recovering the $95\%$ confidence set on expected returns to scale. Since the confidence set is convex, I only need to find the lower and upper bound on the expected return to scale. I find that the $95\%$ confidence set on expected returns to scale is $[0.270,0.405]$. Since the confidence set is nonempty, it follows that the model is not rejected by the data. Hence, I continue the analysis and recover $95\%$ confidence sets on expected output elasticities. The results are reported in Figure \ref{fig:outputelasticities}.
    \begin{figure}[h]
        \centering
	  \includegraphics[width=12cm, trim=10 10 10 10, clip]{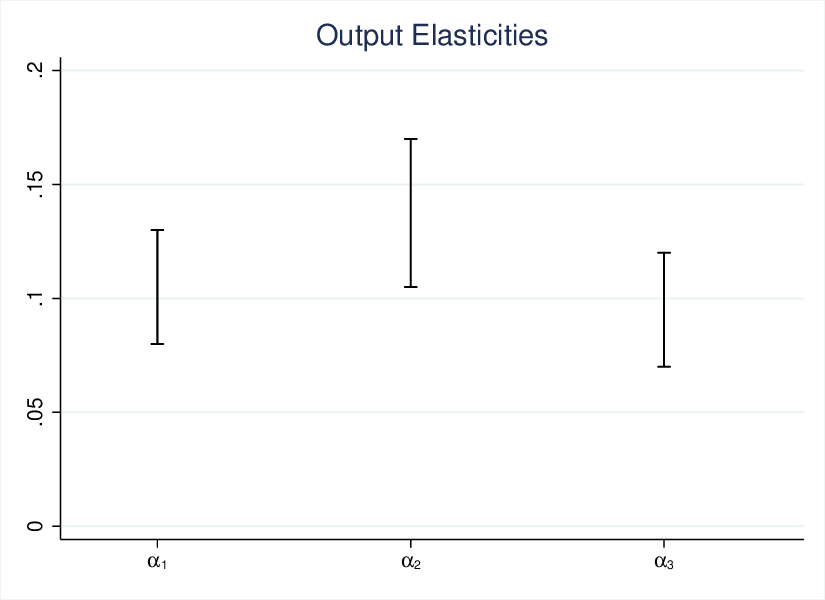}
	  \caption{$95\%$ Confidence Sets on Production Parameters} \label{fig:outputelasticities}
    \end{figure}

    The figure show that, on average, time inputs by mothers increases children welfare by more than time inputs by fathers or children expenditure. More precisely, a $10\%$ increase in childcare by the mother increases children welfare by about $14\%$ while a $10\%$ increase in childcare by the father increases children welfare by about $10\%$. The impacts of children expenditure appears lowest with a $10\%$ increase in children expenditure increasing children welfare by about $9\%$.

    \subsection{Heterogeneity}


    
    This section investigates how the expected production technology varies with household characteristics. Due to the small sample size, I choose to the analysis using linear regressions. First, I analyze heterogeneity in returns to scale through the regression
    \begin{align} \label{eq:reg}
        \bm{RTS} &= \bm{X}\bm{\beta} + \bm{\omega},
    \end{align}

    \noindent
    where $\bm{RTS} = \bm{\alpha}_1 + \bm{\alpha}_2 + \bm{\alpha}_3$ represents returns to scale, $\bm{X}$ is a set of covariates, and $\bm{\omega}$ is a random error. Likewise, I analyze heterogeneity in output elasticities through the regressions
    \begin{align}
        \bm{\alpha}_k &= \bm{X}\widetilde{\bm{\beta}}_k + \widetilde{\bm{\omega}}_k,
    \end{align}
    
    \noindent
    where $\bm{\alpha}_k$ represents output elasticities with respect to input $k$, $\bm{X}$ is a set of covariates, and $\widetilde{\bm{\omega}}_k$ is a random error. I assume the errors are uncorrelated and mean zero conditional on the data.\footnote{In the results below, I impose and test $\mathbb{E}[X_k \bm{\omega}] = 0$, where $k$ denote the $k$th covariate in the regression. I view the nonrejection of the restriction as support for the conditional mean zero assumption.} I wish to emphasize that these regression equations are not estimated separately from the rest of the model, but rather imposed as additional equation restrictions within the model. That is, I make inference on the expected parameters of a regression by adding a moment function such as
    \begin{align*}
        g^{\beta}(D_j,\zeta_j) := \bm{\mathbbm{1}}(RTS_j = \bm{X}_j\bm{\beta} + \omega_j).
    \end{align*}

    \noindent
    The $95\%$ confidence sets on the expected coefficients of the regressions are reported in Table \ref{tab:beta}.
    \begin{table}[htbp]
        \captionof{table}{$95\%$ Confidence Sets on Regression Coefficients} \label{tab:beta}
        \centering
        \begin{adjustbox}{width=1.0 \textwidth,center=\textwidth}
            \begin{threeparttable}[b]
                \begin{tabular}{c c c c c}
                \hline
                \multicolumn{1}{c}{} & \multicolumn{4}{c}{Dependent Variable} \\
                Independent Variable\tnote{a} & $\alpha_1 + \alpha_2 + \alpha_3$ & $\alpha_1$ & $\alpha_2$ & $\alpha_3$ \\ \hline 
                havo, vwo \& mbo (Father) & $[-0.08, 0.05]$ & $[0.02, 0.04]$ & $[-0.02,0.00]$ & $[-0.09,0.03]$ \\
                hbo \& wo (Father) & $[-0.12, 0.03]$ & $[0.00,0.02]$ & $[0.04, 0.07]$ & $[-0.10, 0.05]$ \\
                havo, vwo \& mbo (Mother) & $[0.06, 0.20]$ & $[0.02, 0.05]$ & $[0.05, 0.08]$ & $[-0.05, 0.13]$ \\
                hbo \& wo (Mother) & $[0.02, 0.15]$ & $[0.01, 0.03]$ & $[0.04, 0.07]$ & $[-0.075, 0.10]$ \\
                \#Children & $[-0.06, -0.01]$ & $[-0.03,0.02]$ & $[-0.03, -0.01]$ & $[-0.03, 0.04]$ \\
                Age Children & $[0.00, 0.01]$ & $[-0.03, 0.04]$ & $[0.00, 0.00]$ & $[0.00, 0.01]$ \\
                Dwelling & $[-0.35, -0.10]$ & $[-0.08, -0.06]$ & $[-0.12, -0.10]$ & $[-0.15, 0.10]$ \\
                \hline
                \end{tabular}
                \begin{tablenotes}
                \item[a] \setstretch{1.0} \footnotesize{The education category ``havo, vwo, \& mbo" represents general education that leads to higher education and vocational education that can lead to higher education. The education category ``hbo \& wo" represents higher education. \#children is the number of children in the household. Age Children is the average age of children in the household. Dwelling is an indicator that takes value $0$ if the household rents and $1$ if it owns a house.}
                \end{tablenotes}
            \end{threeparttable}
        \end{adjustbox}
    \end{table}

    The first four rows of Table \ref{tab:beta} capture the impacts of education on the production technology, where education is a categorical variable that reflects the type of education in the Netherlands. The first category represents primary school and pre-vocational secondary education (VMBO) and is used as the base in the regression. The second category represents general education that leads to higher education (HAVO and VWO) and vocational education (MBO) that can lead to higher education. The third category represents higher education (HBO, WO).

    Table \ref{tab:beta} shows that a higher education level increases the expected output elasticity by about $0.02$ for fathers and $0.06$ for mothers. Further, there is a positive spillover effect whereby higher education of one household member increases the expected output elasticity of the other member, with the exception of fathers with havo, vwo \& mbo education. Interestingly, the education level of fathers does not appear to increase expected returns to scale. In contrast, the education level of mothers increases expected returns to scale by about $0.10$. Finally, I cannot infer the impacts of education on the expected output elasticity with respect to children expenditure as the confidence sets are large and include zero.

    Table \ref{tab:beta} also shows that the number of children in the household has a small but negative impact on expected returns to scale and expected output elasticities with respect to time inputs. More interestingly, I find that the dwelling of the household has a large negative impacts on the production of children welfare, where the results are relative to owning a house. For example, households that rent have expected output elasticities with respect to time inputs lower by about $0.07$ and $0.11$ for fathers and mothers, respectively. Further, with $95\%$ confidence, expected returns to scale decrease by at least $0.10$ up to $0.35$ for households that do not own a house.


    \section{Conclusion} \label{Section:Conclusion}

    This paper proposes a novel framework to assess the impacts of welfare reforms directed at families. I show that conditions for point identification in collective models with children such as constant returns to scale are rejected by the data. Instead, my findings are consistent with the empirical human capital literature, such as in \cite{delBoca2014}, where the impacts of parental inputs on children development exhibit decreasing returns to scale. Although my results provide support for the collective model approach to analyzing children development, they also warn against assuming constant returns to scale in the production technology. Indeed, children human capital development is now recognized as an important factor in assessing the costs and benefits of welfare reforms \citep{Mullins2022}, but such cost-benefit analysis crucially depends on the shape of the production technology. Furthermore, my empirical results show that the education level of the mother and the household environment play important roles in the development of children. Those findings provide support for policy interventions targeted at disadvantaged households such as to mitigate children achievement gaps. In spite of the advances made in this paper, there is a need for future research along multiple directions. First, an analysis of children development that allows for potential complementarities in inputs within the collective model may reveal additional interesting patterns in the production of children human capital. Second, additional data on time use such as active and passive time spent with children may provide insights into the reasons for differences in parenting skills between household demographics. Third, an extension of the collective model with children to a dynamic setup would enable one to quantify how early investments in children impact returns to later investments.

    \newpage

    \noindent
    {\LARGE \textbf{Appendix}}
    
    \appendix    

    \section{Sample Construction}

    For each household, I compute how many children live at home and focus on households with children living at home. I drop observations that pertain to single households and those that do not have any child living at home. Next, I remove observations for which one or both members have zero or missing wages. In this way, I avoid erroneously assuming a household member works when (s)he is not and vice versa. Furthermore, I remove observations where hours worked is missing to improve the quality of the imputation of leisure. Finally, I remove observations for which there is zero or missing private or public expenditures, as well as observations for which private or public expenditures are greater than the household gross income.

    I impute zero and missing values of childcare and children expenditure for observations that remain in the sample after the previous selection criteria. The imputation is a simple year average of the variable. This imputation is likely to be an overestimate of the actual value for some households and an underestimate for others. Hence, it should be consistent with the moment conditions on measurement error.

    Besides the imputation for some zero or missing inputs, I compute leisure of each household member as a residual according to the following equation:
        \begin{align*}
        l_{t}^{i} &= 168 - 56 - b_{t}^{i} - h_{t}^{i},
    \end{align*}

    \noindent
    where $l_{t}^{i}$ is leisure, $168$ is the total number of hours in a week, $56$ is the number of hours spent sleeping in a week, $b_{t}^{i}$ is time spent working, and $h_{t}^{i}$ is time spent on childcare. I drop households where implied leisure is negative as this may reflect a major problem with time data. There was only one household where this occurred in the sample.  Clearly, my construction of leisure may still be inaccurate. I tackle this problem by allowing for measurement error in leisure. Precisely, since the time constraint requires $l_{t}^{i} + b_{t}^{i} + h_{t}^{i} = 168 - 56$ and $h_{t}^{i}$ is mismeasured, I let true leisure be minus true childcare ($l_{t}^{\star i} = - h_{t}^{\star i}$) such that the time constraint holds at the true variables. These weekly variables are then scaled such as to obtain time inputs for the average number of days in a month. Since there are seven days in a week and a month has slightly more than $30$ days on average, I multiply time inputs by $4.3$.
    
    As a last refinement of the sample, I remove households that are part of the LISS data for strictly less than 3 years. To obtain a balanced panel, I keep the first $3$ observations of each household that is present for strictly more than $3$ years. Thus, my sample is composed of households from various sets of $3$ periods (e.g., $2009$-$2010$-$2012$ or $2010$-$2012$-$2015$). I limit myself to a three-year panel despite the greater empirical bite that could be obtained with additional periods to avoid any additional decrease in the sample size. Lastly, I remove households with missing demographic information as it is necessary for the linear regression in Section \ref{Section:Results}.

    \section{Proofs}
    
    \subsection{Proof of Theorem 1}

    \noindent
    $(i) \implies (ii)$ 

    \noindent
    The household problem can be written as
    \begin{align*}
        \max_{ (l^1,l^2,h^1,h^2,q^{1},q^{2},Q,c) \in \mathbb{R}_{+}^{2} \times \mathbb{R}_{++}^{2} \times \mathbb{R}_{+}^{2 L}\times \mathbb{R}_{+} \times \mathbb{R}_{++}}  \mu_{t}^{1} U^{1}(l^1,q^1,Q,W) +
        \mu_{t}^{2} U^{2}(l^2,q^{2},Q,W),
    \end{align*}

	\noindent
	subject to satisfying the household constraints
 	\begin{align*}
	    (q^{1} + q^{2}) + Q + c &= w_{t}^{1} (\tau-l^1-h^1) + w_{t}^{2}(\tau-l^2-h^2) \\
            W &= F(h^1,h^2,c)e^{\epsilon_t}.
	\end{align*}

    \noindent
    The first-order conditions are given by
    \begin{align*}
        \mu_{t}^{i} \pdv{{U}^{i}}{l^i} &= \eta_t w_{t}^{i} \\
        \mu_{t}^{i} \pdv{{U}^{i}}{q^i} &= \eta_t. \\
        \sum_{i} \mu_{t}^{i} \pdv{{U}^{i}}{W_t} \cdot \pdv{W_t}{h^{1}} &= \eta_t w_{t}^{1} \\
        \sum_{i} \mu_{t}^{i} \pdv{{U}^{i}}{W_t} \cdot \pdv{W_t}{h^{2}} &= \eta_t w_{t}^{2} \\
        \sum_{i} \mu_{t}^{i} \pdv{{U}^{i}}{W_t} \cdot \pdv{W_t}{c} &= \eta_t \\
        \sum_{i} \mu_{t}^{i} \pdv{{U}^{i}}{Q} &= \eta_t,
    \end{align*}

    \noindent
    where the equalities hold for some supergradient of the utility function.\footnote{For corner solutions, the first-order conditions may only hold with inequality. The argument does not require any substantive change to accommodate this possibility.} Next, define
    \begin{align*}
        \lambda_{t}^{i} &= \frac{\eta_t}{\mu_{t}^{i}} \\
        P_t^{i} &= \frac{\mu_{t}^{i}}{\eta_t} \pdv{U^i}{W_t} \\
        \mathcal{P}_{t}^{i} &= \frac{\mu_{t}^{i}}{\eta_t} \pdv{U^i}{Q}.
    \end{align*}

    \noindent
    The first-order conditions can be rewritten as
    \begin{align} \label{convenientFOCs0}
        \pdv{{U}^{i}}{l^i} &= \lambda_{t}^{i} w_{t}^{i} \\ \label{convenientFOCs1}
        \pdv{{U}^{i}}{q^i} &= \lambda_{t}^{i}. \\ \label{convenientFOCs2}
        (P_t^{1} + P_t^{2}) \pdv{W_t}{h^1} &= w_{t}^{1} \\ \label{convenientFOCs3}
        (P_t^{1} + P_t^{2}) \pdv{W_t}{h^2} &=  w_{t}^{2} \\ \label{convenientFOCs4}
        (P_t^{1} + P_t^{2}) \pdv{W_t}{c} &= 1 \\
        \mathcal{P}_{t}^{1} + \mathcal{P}_{t}^{2} &= 1 \label{eq:lindahl}.
    \end{align}

    \noindent
    Using the concavity of the utility functions I obtain
    \begin{align*}
        U_{s}^{i} - U_{t}^{i} &\leq \left[ \pdv{{U}^{i}}{l_{t}^i} \: (l_{s}^{i} - l_{t}^{i}) + \pdv{{U}^{i}}{q_{t}^i} \: (q_{s}^{i} - q_{t}^{i}) + \pdv{{U}^{i}}{Q} \:(Q_s - Q_t) + \pdv{{U}^{i}}{W_t} \: (W_s - W_t) \right],
    \end{align*}

    \noindent
    where $U_{t}^{i} := U^{i}(l_{t}^{i},q_{t}^{i},Q_t,W_t)$ for all $t \in \mathcal{T}$. Substituting the derivatives of the utility function for their expressions yields
    \begin{align*}
        U_{s}^{i} - U_{t}^{i} &\leq \lambda_{t}^{i} \Bigl[ w_{t}^{i} (l_{s}^{i} - l_{t}^{i}) + (q_{s}^{i} - q_{t}^{i}) + \mathcal{P}_{t}^{i} (Q_s - Q_t) + P_t^{i} (W_s - W_t)  \Bigr].
    \end{align*}

    \noindent
    Next, using the concavity of the production function I obtain
    \begin{align*}
        F_{s} - F_{t} \leq \pdv{{F}}{h_{t}^1} \: (h_{s}^{1} - h_{t}^{1}) + \pdv{{F}}{h_{t}^2} \: (h_{s}^{2} - h_{t}^{2}) + \pdv{{F}}{c_{t}} \: (c_s - c_t),
    \end{align*}

    \noindent
    where $F_{t} := F(h_{t}^{1},h_{t}^{2},c_t)$ for all $t \in \mathcal{T}$. Substituting the derivatives of the production function from equations \eqref{convenientFOCs2}-\eqref{convenientFOCs4} yields
    \begin{align*}
        F_{s} - F_{t} &\leq \frac{w_{t}^{1}}{P_te^{\epsilon_t}} (h_{s}^{1} - h_{t}^{1}) + \frac{w_{t}^{2}}{P_te^{\epsilon_t}} (h_{s}^{2} - h_{t}^{2}) + \frac{1}{P_te^{\epsilon_t}} (c_s - c_t).
    \end{align*}
    
    \noindent
    Putting everything together, these inequalities should hold for some $U_{t}^{i}$, $\lambda_{t}^{i} > 0$, $\mathcal{P}_{t}^{i} > 0$ such that $\mathcal{P}_{t}^{1} + \mathcal{P}_{t}^{2} = 1$, $P_t^{i} > 0$ such that $P_t^1 + P_t^2 = P_t$, $W_t$, $F_t > 0$ and $\epsilon_t$ such that $W_t = F_te^{\epsilon_t}$, $ t = 1, \dots, T$.

    \noindent
    $(ii) \implies (i)$\\
    \noindent
    I have to show that, if Theorem \ref{proposition1} $(ii)$ holds, then there exist concave utility functions and a concave production function that rationalize the data. Thus, let $\tau = \{t_j\}_{j=1}^{m}$, $m \geq 2$, $t_j \in \mathcal{T}$ denote a sequence of indices and $\mathcal{I}$ denote the set of all such indices. Define
    \begin{align*}
        U^{i}&(l^i,q^{i},Q,W) := \\
        &\underset{\tau \in \mathcal{I}}{\min} \Bigl\{ \lambda_{t_m}^{i} \Bigl[ w_{t_m}^{i} (l^{i} - l_{t_m}^{i}) + (q^{i} - q_{t_m}^{i}) + \mathcal{P}_{t_m}^{i} (Q - Q_{t_m}) + P_{t_m}^{i} ( W - W_{t_m} ) \Bigr] + \\
        &+ \sum_{j=1}^{m-1} \lambda_{t_j}^{i} \Bigl[ w_{t_j}^{i} (l_{t_{j+1}}^{i} - l_{t_j}^{i}) + (q_{t_{j+1}}^{i} - q_{t_j}^{i}) + \mathcal{P}_{t_j}^{i} (Q_{t_{j+1}} - Q_{t_j}) + P_{t_j}^{i} ( W_{t_{j+1}} - W_{t_j} ) \Bigr] 
        \Bigr\}.
    \end{align*}
    
    \noindent
    The function is the pointwise minimum of a collection of linear functions. Thus, it is continuous, increasing, and concave. By definition of $U^{i}$, there is some sequence of indices such that
    \begin{align*}
        U^{i}&(l_{t}^i,q_{t}^{i},Q_t,W_t) \geq \\
        &\lambda_{t_m}^{i} \Bigl[ w_{t_m}^{i} (l_{t}^{i} - l_{t_m}^{i}) + (q_{t}^{i} - q_{t_m}^{i}) + \mathcal{P}_{t_m}^{i} (Q_t - Q_{t_m}) + P_{t_m}^{i} ( W_t - W_{t_m}) \Bigr] + \\
        &+ \sum_{j=1}^{m-1} \lambda_{t_j}^{i} \Bigl[ w_{t_j}^{i} (l_{t_{j+1}}^{i} - l_{t_j}^{i}) + (q_{t_{j+1}}^{i} - q_{t_j}^{i}) + \mathcal{P}_{t_j}^{i} (Q_{t_{j+1}} - Q_{t_j}) + P_{t_j}^{i} (W_{t_{j+1}} - W_{t_j})  \Bigr].
    \end{align*}

    \noindent
    Add any allocation $(l^i,q^{i},Q,W)$ to the sequence and use the definition of $U^i$ once again to obtain
    \begin{align*}
        &\lambda_{t}^{i} \Bigl[ w_{t}^{i} (l^{i} - l_{t}^{i}) + (q^{i} - q_{t}^{i}) + \mathcal{P}_{t}^{i} (Q - Q_{t}) + P_t^{i} (W - W_t) \Bigr] + \\
        &+ \lambda_{t_m}^{i} \Bigl[ w_{t_m}^{i} (l_{t}^{i} - l_{t_m}^{i}) + (q_{t}^{i} - q_{t_m}^{i}) + \mathcal{P}_{t_m}^{i} (Q_t - Q_{t_m}) + P_{t_m}^{i} (W_t - W_{t_m}) \Bigr] + \\
        &+ \sum_{j=1}^{m-1} \lambda_{t_j}^{i} \Bigl[ w_{t_j}^{i} (l_{t_{j+1}}^{i} - l_{t_j}^{i}) + (q_{t_{j+1}}^{i} - q_{t_j}^{i}) + \mathcal{P}_{t_j}^{i} (Q_{j+1} - Q_{t_j}) + P_{t_j}^{i}  ( W_{t_{j+1}} -  W_{t_j} ) \Bigr] \\
        &\geq U^{i}(l^i,q^{i},W,Q).
    \end{align*}

    \noindent
    Hence, rearranging the previous expression yields
    \begin{align*}
        U^{i}&(l^i,q^{i},Q,W) - U^{i}(l_{t}^i,q_{t}^{i},Q_t,W_t) \leq \lambda_{t}^{i} \Bigl[ w_{t}^{i} (l^{i} - l_{t}^{i}) + (q^{i} - q_{t}^{i}) + \mathcal{P}_{t}^{i} (Q - Q_{t}) + P_t^{i} ( W - W_t ) \Bigr].
    \end{align*}

    \noindent
    Note that the two first supergradients of $U^{i}(l_t^i,q_{t}^{i},Q_t,W_t)$ give the first-order conditions \eqref{convenientFOCs0}-\eqref{convenientFOCs1}. Next, define 
    \begin{align*}
        F&(h^1,h^{2},c) :=
        \underset{\tau \in \mathcal{I}}{\min} \Bigl\{ \frac{1}{P_te^{\epsilon_t}} \Bigl[ w_{t}^{1} (h^{1} - h_{t}^{1}) + w_{t}^{2} (h^2 - h_{t}^{2}) + (c - c_t) + \\
        &+ \sum_{j=1}^{m-1} \Bigl[ \frac{1}{P_{t_j}e^{\epsilon_{t_j}}} \Bigl[ w_{t_j}^{1} (h_{t_{j+1}}^{1} - h_{t_j}^{1}) + w_{t_j}^{2} (h_{j+1}^2 - h_{t_j}^{2}) + (c_{t_{j+1}} - c_{t_j}) \Bigr]
        \Bigr\}.
    \end{align*}

    \noindent
    This function is continuous, increasing, and concave in $(h^1,h^2,c)$. By an identical argument as before, I obtain
    \begin{align*}
        F&(h^1,h^{2},c) - F(h_{t}^1,h_{t}^{2},c_t) \leq \frac{1}{P_te^{\epsilon_t}} \Bigl[ w_{t}^{1} (h^{1} - h_{t}^{1}) + w_{t}^{2}(h^{2} - h_{t}^{2}) + P_t^{i} (c - c_t) \Bigr].
    \end{align*}
    
    \noindent
    Hence, the supergradients of $F(h_t^1,h_t^2,c_t)$ yield equations \eqref{convenientFOCs2}-\eqref{convenientFOCs4} and I conclude that Theorem \ref{proposition1} $(ii)$ has the same implications as the household problem \eqref{model}.

    \noindent
    $(ii) \implies (iii)$
    \noindent

    \noindent
    Let us begin by noting that the Afriat inequalities can be combined such that for all $\{t_k\}_{k=1}^{m} \in \mathcal{I}$ and all $i \in \{1,2\}$
    \begin{align*}
        0 &\leq \sum_{k=1}^{m} \lambda_{t_{k+1}}^{i} a_{t_k,t_{k+1}}^{i}.
    \end{align*}

    \noindent
    Observe that the set of all sequences $\mathcal{I}$ can be reduced to the set of all finite sequences as any sequence that satisfies this inequality is also satisfied without cycles. For the sake of a contradiction, suppose GARP is not satisfied for some household member. Then, there exists a cycle such that $a_{t_1,t_2}^{i} \leq 0$, $a_{t_2,t_3}^{i} \leq 0$, \dots, $a_{t_m,t_1}^{i} < 0$. Thus, it follows that
    \begin{equation*}
        \lambda_{t_2}^{i}a_{t_1,t_2}^{i} + \lambda_{t_3}^{i}a_{t_2,t_3}^{i} + \dots + \lambda_{t_1}^{i}a_{t_m,t_1}^{i} < 0,
    \end{equation*}

    \noindent
    a contradiction of cyclical monotonicity. Next, wish to show that GAPM holds. Observe that the inequalities for the production function can be rearranged as
    \begin{align*}
         P_tF_{t}e^{\epsilon_t} + w_{t}^{1} h_{t}^{1} + w_{t}^{2} h_{t}^{2} + c_t \leq P_tF_{s}e^{\epsilon_t} + w_{t}^{1} h_{s}^{1}  + w_{t}^{2} h_{s}^{2} + c_s \quad \forall s,t \in \mathcal{T},
    \end{align*}

    \noindent
    where I further have $W_t = F_te^{\epsilon_t}$ by assumption.

    \noindent
    $(iii) \implies (ii)$

    \noindent
    Suppose that GARP holds for each household member. Then, an application of \cite{FST2004} shows the existence of the Afriat inequalities for each household member. Furthermore, rearranging the inequalities in GAPM yields the desired inequalities.

    \subsection{Proof of Lemma \ref{lemma:RTS}}

    \begin{proof}
    From the first-order conditions of the model and the Hicks-neutrality of productivity shocks, I have
    \begin{align*}
        \pdv{F(h_{t}^{1},h_{t}^{2},c_t)}{h_{t}^{1}}e^{\epsilon_t} &= \frac{w_{t}^{1}}{P_{t}} \\
        \pdv{F(h_{t}^{1},h_{t}^{2},c_t)}{h_{t}^2}e^{\epsilon_t} &= \frac{w_{t}^{2}}{P_t} \\ 
        \pdv{F(h_{t}^{1},h_{t}^{2},c_t)}{c_{t}^2}e^{\epsilon_t} &= \frac{1}{P_t}.
    \end{align*}

    \noindent
    I can multiply each marginal product by its own factor of production to get
    \begin{align*}
        \pdv{F(h_{t}^{1},h_{t}^{2},c_t)}{h_{t}^{1}}h_{t}^{1}e^{\epsilon_t} &= \frac{w_{t}^{1}h_{t}^{1}}{P_{t}} \\
        \pdv{F(h_{t}^{1},h_{t}^{2},c_t)}{h_{t}^2}h_{t}^{2}e^{\epsilon_t} &= \frac{w_{t}^{2}h_{t}^{2}}{P_t} \\ 
        \pdv{F(h_{t}^{1},h_{t}^{2},c_t)}{c_{t}^2}c_te^{\epsilon_t} &= \frac{c_t}{P_t}.
    \end{align*}

    \noindent
    Summing up these equations and multiplying by $P_t$, I obtain
    \begin{equation*}
        P_t \left[ \pdv{F(h_{t}^{1},h_{t}^{2},c_t)}{h_{t}^{1}}h_{t}^{1} + \pdv{F(h_{t}^{1},h_{t}^{2},c_t)}{h_{t}^2}h_{t}^{2} + \pdv{F(h_{t}^{1},h_{t}^{2},c_t)}{c_{t}^2}c_t \right]e^{\epsilon_t} = E_t,
    \end{equation*}

    \noindent
    where $E_t := w_{t}^{1}h_{t}^{1} + w_{t}^{2}h_{t}^{2} + c_t$. Since the production function is homogeneous of degree $RTS \in (0,1]$, an application of Euler's theorem gives
    \begin{equation*}
        RTS P_tW_t = E_t,
    \end{equation*}

    \noindent
    where I used the production function equation $W_t = F(h_{t}^{1},h_{t}^{2},c_t)e^{\epsilon_t}$.
    \end{proof}

    \section{Sampling from the Feasible Space: A Blocked Gibbs Sampler}

    This section explains how to draw latent variables that satisfy the household problem. Since private expenditures are directly observed in the application, I do not need to find such quantities in the procedure. It is quite straightforward to extend the procedure to further find private expenditures if those were not observed, however.
    
    Let $P_t = P_t^{1} + P_t^{2}$ and recall that the data are consistent with the model if there exist personalized prices $P_t^{i} > 0$, personalized prices $\mathcal{P}_{t}^{i} > 0$ such that $\mathcal{P}_{t}^{1} + \mathcal{P}_{t}^{2} = 1$, numbers $U^{i}$, $\lambda_{t}^{i}$, $W_t > 0$, and true inputs $l_{t}^{\star i}$, $h_{t}^{\star i}$, $c_{t}^{\star} > 0$ such that for all $s,t \in \mathcal{T}$ and all $i \in \{1,2\}$
    \begin{align*}
         U_{s}^{i} - U_{t}^{i} &\leq \lambda_{t}^{i} \Big[ w_{t}^{i} (l_{s}^{\star i} - l_{t}^{\star i}) + (q_{s}^{i} - q_{t}^{i}) + \mathcal{P}_{t}^{i}(Q_s - Q_t)
         + P_t^{i} ( W_s - W_t) \Big] \\
        \alpha_{1} &= \frac{w_{t}^1h_{t}^{\star 1}}{P_tW_t} \\
        \alpha_{2} &= \frac{w_{t}^2h_{t}^{\star 2}}{P_tW_t} \\
        \alpha_{3} &= \frac{w_{t}^1h_{t}^{\star 1}}{P_tW_t} \\
        \epsilon_t &= \log(W_t) - \alpha_{1}\log(h_{t}^{\star 1}) - \alpha_{2}\log(h_{t}^{\star 2}) - \alpha_{3}\log(c_{t}^{\star}),
    \end{align*}

    \noindent
    where the last equation is obtained from the natural logarithm of the production function equation. Suppose I have a solution 
    \begin{equation*}
        (U_{t}^{i}(r),\lambda_{t}^{i}(r), l_{t}^{\star i}(r), h_{t}^{\star i}(r), c_{t}^{\star}(r), P_t^{i}(r), \mathcal{P}_{t}^{i}(r))_{i \in \{1,2\}, t \in \mathcal{T}},
    \end{equation*}
    
    \noindent
    where $r$ denote the $r$th solution found by some solver. I provide a feasible algorithm that guarantees the next set of latent variables to be in the feasible space conditional on the data. The algorithm works provided the feasible space is nonempty in each block and standard regularity conditions associated with Gibbs samplers hold. 
    
    Intuitively, the idea is to recognize that it is difficult to uniformly sample all latent variables at once from the feasible space because the feasible space is complicated. Fortunately, I can break down the feasible space into conditional convex polytopes for which I can obtain closed-form bounds on the support of the latent variables. It thus becomes straightforward to uniformly sample a set of latent variables from each conditional convex polytope. \\




    \noindent
    \textbf{Step 1: Marginal Utility of Expenditure}

    \noindent
    Let $\Lambda = [1,L]$ denote the support of $\lambda_{t}^{i}$, where $L$ is an arbitrarily large number. Given a solution at step $r$, I want to find $\lambda_{t}^{i}(r+1)$ that satisfies
    \begin{align*}
        U_{s}^{i}(r) - U_{t}^{i}(r) &\leq \lambda_{t}^{i}(r+1)  \Big[ w_{t}^{i} (l_{s}^{\star i}(r) - l_{t}^{\star i}(r)) + (q_{s}^{i} -q_{t}^{i}) + \\
        &+ \mathcal{P}_{t}^{i}(r)(Q_s - Q_t) + P_t^{i}(r) \big( W_{s}(r) - W_{t}(r) \big) \Big].
    \end{align*}

    \noindent
    For convenience, let
    \begin{align*}
        denom^i := &\Big[ w_{t}^{i} (l_{s}^{\star i} - l_{t}^{\star i}) + (q_{s}^{i} - q_{t}^{i}) + \mathcal{P}_{t}^{i}(r)(Q_s - Q_t) + P_t^{i}(r) \big( W_{s}(r) - W_{t}(r) \big) \Big].
    \end{align*}

    \noindent
    It follows that
    \begin{align*}
        \lambda_{t}^{i}(r+1) \Delta \frac{U_{s}^{i}(r) - U_{t}^{i}(r)}{denom^i},
    \end{align*}
    
    \noindent
    where $\Delta := >$ if $denom^i > 0$ and $\Delta := <$ otherwise. Note that each $\lambda_{t}^{i}(r+1)$ has $T$ bounds. The greatest lower bound on $\lambda_{t}^{i}(r+1)$ is the maximum between one and the greatest lower bound. If there is no lower bound, then the greatest lower bound is one. Likewise, the least upper bound is the minimum between one and the least upper bound. If there is no upper bound, then the least upper bound is one. Draw $\lambda_{t}^{i}(r+1)$ uniformly over the support defined by the greatest lower bound and least upper bound. \\

    

    \noindent
    \textbf{Step 2: Children Welfare, Leisure, and Childcare}

    \noindent
    Conditional on the new solution $\lambda_{t}^{i}(r+1)$, I want children welfare, true leisure, and true childcare to be positive such that
    \begin{align}
        W_{t}(r) + \alpha \xi(W_{t}) &> 0 \\
        h_{t}^{\star 1}(r) + \alpha \xi(h_{t}^{\star 1}) &> 0 \\
        h_{t}^{\star 2}(r) + \alpha \xi(h_{t}^{\star 2}) &> 0 \\
        l_{t}^{\star 1}(r) + \alpha \xi(l_{t}^{\star 1}) &> 0 \\
        l_{t}^{\star 2}(r) + \alpha \xi(l_{t}^{\star 2}) &> 0.
    \end{align}
    
    \noindent
    These positivity constraints provide a set of inequality restrictions on $\alpha$. Next, I must further ensure that new leisure and new childcare of each household member satisfy the normalized time constraint
    \begin{align*}
        h_{t}^{\star i}(r) + \alpha \xi(h_{t}^{\star i}) + m_{t}^{i} + l_{t}^{\star i}(r) + \alpha \xi(l_{t}^{\star i}) = 1.
    \end{align*}

    \noindent
    This equation implies $\xi(h_{t}^{\star i}) = - \xi(l_{t}^{\star i})$, $i \in \{1,2\}$. That is, the direction taken for new childcare is the opposite of the direction for new leisure. 
    
    Next, it is important to ensure new children welfare and new childcare are consistent with a positive children expenditure:
    \begin{align*}
        w_{t}^{1}(h_{t}^{\star 1}(r) + \alpha \xi(h_{t}^{\star 1})) + w_{t}^{2}(h_{t}^{\star 2}(r) + \alpha \xi(h_{t}^{\star 2})) \leq P_{t}(r)(W_t(r) + \alpha \xi(W_t)).
    \end{align*}

    \noindent
    Rearranging, one obtains
    \begin{align*}
        \alpha \Delta \frac{ P_t(r)W_t(r) - w_{t}^{1}h_{t}^{\star 1}(r) - w_{t}^{2}h_{t}^{\star 2}(r) }{ w_{t}^{1} \xi(h_{t}^{\star 1}) +  w_{t}^{2} \xi(h_{t}^{\star 2}) - P_t \xi(W_t) },
    \end{align*}
    
    \noindent
    where $\Delta := <$ if $w_{t}^{1} \xi(h_{t}^{1}) +  w_{t}^{2} \xi(h_{t}^{2}) - P_t \xi(W_t) > 0$ and $\Delta := >$ otherwise. Further, I must also ensure that new children welfare and new childcare are compatible with decreasing returns to scale given children expenditure:
    \begin{align*}
        w_{t}^{1}(h_{t}^{\star 1}(r) + \alpha \xi(h_{t}^{\star 1})) + w_{t}^{2}(h_{t}^{\star 2}(r) + \alpha \xi(h_{t}^{\star 2})) + c_{t}(r) \leq P_{t}(r)(W_t(r) + \alpha \xi(W_t)).
    \end{align*}
    
    \noindent
    Rearranging, one obtains
    \begin{equation*}
        \alpha \Delta \frac{ P_t(r)W_t(r) - w_{t}^{1}h_{t}^{\star 1}(r) - w_{t}^{2}h_{t}^{\star 2}(r) - c_{t}(r) }{ w_{t}^{1} \xi(h_{t}^{\star 1}) +  w_{t}^{2} \xi(h_{t}^{\star 2}) - P_t \xi(W_t) },
    \end{equation*}

    \noindent
    where $\Delta := <$ if $w_{t}^{1} \xi(h_{t}^{1}) +  w_{t}^{2} \xi(h_{t}^{2}) - P_t \xi(W_t) > 0$ and $\Delta := >$ otherwise.
    
    Finally, I want new children welfare $W_{t}(r+1)$, new leisure $l_{t}^{\star i}(r+1)$, and new childcare $h_{t}^{\star i}(r+1)$ to satisfy 
    \begin{align*}
        U_{s}^{i}(r) - U_{t}^{i}(r) &\leq \lambda_{t}^{i}(r+1) \Big[ w_{t}^{i} (l_{s}^{\star i}(r) + \alpha \xi(l_{s}^{\star i}) - l_{t}^{\star i}(r) - \alpha \xi(l_{t}^{\star i})) + (q_{s}^{i} - q_{t}^{i}) + \\
        &+ \mathcal{P}_{t}^{i}(r)(Q_s - Q_t) + P_t^{i}(r) \big( W_s(r) + \alpha\xi(W_s) - W_t(r) - \alpha\xi(W_t) \big) \Big].
    \end{align*}

    \noindent
    This inequality can be rewritten as 
    \begin{align*}
        U_{s}^{i}(r) &- U_{t}^{i}(r) - \lambda_{t}^{i}(r+1) \Big[ w_{t}^{i} (l_{s}^{\star i}(r) - l_{t}^{\star i}(r)) + (q_{s}^{i} - q_{t}^{i}) + \mathcal{P}_{t}^{i}(r)(Q_s - Q_t) \\
        &+ P_t^{i}(r) \big( W_s(r) - W_{t}(r) \big) \Big] \\ &
        \leq \alpha \lambda_{t}^{i}(r+1) \big( P_t^{i}(r) (\xi(W_s) - \xi(W_t)) + w_{t}^{i}(\xi(l_{s}^{\star i}) - \xi(l_{t}^{\star i})) \big).
    \end{align*}

    \noindent
    Let $num^{i}$ denote the left-hand side of this inequality. Thus, I have
    \begin{align*}
        \alpha \Delta \frac{num^{i}}{ \lambda_{t}^{i}(r) \big( P_t^{i}(r)(\xi(W_s) - \xi(W_t)) + \xi(w_{t}^{i})(l_{s}^{\star i} - l_{t}^{\star i}) \big)},
    \end{align*}
    
    \noindent
    where $\Delta := >$ if $\lambda_{t}^{i}(r) \big( P_t^{i}(r) (\xi(W_s) - \xi(W_t)) + w_{t}^{i}(\xi(l_{s}^{\star i}) - \xi(l_{t}^{\star i})) \big) > 0$ and $\Delta := <$ otherwise. Draw $\alpha$ uniformly over its support as defined by the greatest lower bound and the least upper bound from the previous sets of inequalities. I obtain $(W_{t}(r+1),l_{t}^{\star i}(r+1),h_{t}^{\star i}(r+1))_{t \in \mathcal{T}}$ by picking $\alpha$ uniformly over its support defined by the greatest lower bound and least upper bound derived from the above inequalities. \\

    \noindent
    \textbf{Step 3: Utilities and Personalized Prices}

    \noindent
    I want personalized prices to be positive such that
    \begin{align}
        \mathcal{P}_{t}^{i}(r) + \beta \xi(\mathcal{P}_{t}^{i}) &> 0\\
        P_t^{i}(r) + \beta \xi(P_t^{i}) &> 0.
    \end{align}

    \noindent
    These inequalities can be transformed to get bounds on $\beta$:
    \begin{align}
         \beta  &\Delta -\frac{\mathcal{P}_t^{i}(r)}{\xi(\mathcal{P}_t^{i})} \label{support3} \\
         \beta  &\Delta -\frac{P_t^{i}(r)}{\xi(P_t^{i})} \label{support4},
    \end{align}

    \noindent
    where $\Delta := >$ if $\xi(\cdot) > 0$ and $\Delta := <$ otherwise. Next, similar to the previous step it is important to ensure new personalized prices for children welfare are consistent with a positive children expenditure:
    \begin{equation*}
        w_{t}^{1}h_{t}^{\star 1}(r+1) + w_{t}^{2}h_{t}^{\star 2}(r+1) \leq (P_t^{1}(r) + \beta \xi(P_t^1) + P_t^{2}(r) + \beta \xi(P_t^{2}))W_t(r+1).
    \end{equation*}

    \noindent
    Rearranging, one obtains
    \begin{equation*}
        \beta \Delta \frac{w_{t}^{1}h_{t}^{\star 1}(r+1) + w_{t}^{2}h_{t}^{\star 2}(r+1) - (P_{t}^{1}(r) + P_{t}^{2}(r))W_t(r+1)}{(\xi(P_{t}^{1}) + \xi(P_{t}^{2}))W_t(r+1)},
    \end{equation*}

    \noindent
    where $\Delta := >$ if $(\xi(P_{t}^{1}) + \xi(P_{t}^{2}))W_t(r+1) > 0$ and $\Delta := <$ otherwise. Further, I must also ensure that new personalized prices for children welfare are compatible with decreasing returns to scale given children expenditure:
        \begin{equation*}
        w_{t}^{1}h_{t}^{\star 1}(r+1) + w_{t}^{2}h_{t}^{\star 2}(r+1) + c_t(r) \leq (P_t^{1}(r) + \beta \xi(P_t^1) + P_t^{2}(r) + \beta \xi(P_t^{2}))W_t(r+1).
    \end{equation*}

    \noindent
    Rearranging, one obtains
    \begin{equation*}
        \beta \Delta \frac{w_{t}^{1}h_{t}^{\star 1}(r+1) + w_{t}^{2}h_{t}^{\star 2}(r+1) + c_t(r) - (P_{t}^{1}(r) + P_{t}^{2}(r))W_t(r+1)}{(\xi(P_{t}^{1}) + \xi(P_{t}^{2}))W_t(r+1)},
    \end{equation*}

    \noindent
    where $\Delta := >$ if $(\xi(P_{t}^{1}) + \xi(P_{t}^{2}))W_t(r+1) > 0$ and $\Delta := <$ otherwise.
    
    Finally, starting with the new numbers $\lambda_{t}^{i}(r+1)$, $W_{t}(r+1)$, $l_{t}^{\star i}(r+1)$, and $h_{t}^{\star i}(r+1)$, I want new utilities $U_{t}^{i}(r+1)$ and new personalized prices $\mathcal{P}_{t}^{i}(r+1)$, $P_{t}^{i}(r+1)$ to satisfy
    \begin{align*}
        U_{s}^{1}(r) &+ \beta \xi(U_{s}^{1}) - U_{t}^{1}(r) - \beta \xi(U_{t}^{1}) \leq \\ 
        &\lambda_{t}^{1}(r+1) \Big[ w_{t}^{1} (l_{s}^{\star 1}(r+1) - l_{t}^{\star 1}(r+1)) + (q_{s}^{1} - q_{t}^{1}) + \\
        &+ \big(\mathcal{P}_{t}^{1}(r) + \beta\xi(\mathcal{P}_{t}^{1})\big)\big(Q_s-Q_t\big) + \big( P_t^{1}(r) + \beta \xi(P_t^{1}) \big) \big( W_{s}(r+1) - W_{t}(r+1) \big) \Big],
    \end{align*}

    \noindent
    and
    \begin{align*}
        U_{s}^{2}(r) &+ \beta \xi(U_{s}^{2}) - U_{t}^{2}(r) - \beta \xi(U_{t}^{2}) \leq \\ 
        &\lambda_{t}^{2}(r+1) \Big[ w_{t}^{2} (l_{s}^{\star 2}(r+1) - l_{t}^{\star 2}(r+1)) + (q_{s}^{2} - q_{t}^{2}) + \\
        &+ \big(\mathcal{P}_{t}^{2}(r) + \beta\xi(\mathcal{P}_{t}^{2})\big)\big(Q_s-Q_t\big) + \big( P_t^{2}(r) + \beta \xi(P_t^{2}) \big) \big( W_{s}(r+1) - W_{t}(r+1) \big) \Big].
    \end{align*}

    \noindent
    With some algebra, I can rewrite the inequalities for household member $1$ as
    \begin{align*}
       \beta \Big[ \xi(U_{s}^{1}) &- \xi(U_{t}^{1}) + \lambda_{t}^{1}(r+1) \big[ 
       \xi(\mathcal{P}_{t}^{1})\big(Q_t - Q_s\big) + \xi(P_t^{1}) \big( W_{t}(r+1) - W_{s}(r+1) \big)  \big] \Big] \\ 
        &\leq U_{t}^{1}(r) - U_{s}^{1}(r) + \lambda_{t}^{1}(r+1) \Big[ w_{t}^{1} (l_{s}^{\star 1}(r+1) - l_{t}^{\star 1}(r+1)) +  (q_{s}^{1} - q_{t}^{1}) + \\
        &+ \mathcal{P}_{t}^{1}(r)\big(Q_s-Q_t\big) + P_t^{1}(r) \big( W_{s}(r+1) - W_{t}(r+1) \big) \Big],
    \end{align*}
    \noindent
    and the inequalities for household member $2$ as
    \begin{align*}
        \beta \Big[ \xi(U_{s}^{2}) &- \xi(U_{t}^{2}) + \lambda_{t}^{2}(r+1) \big[ \xi(\mathcal{P}_{t}^{2})\big(Q_t - Q_s\big) + \xi(P_t^{2}) \big( W_{t}(r+1) - W_{s}(r+1) \big) \big] \Big] \\
        &\leq U_{t}^{2}(r) - U_{s}^{2}(r) + \lambda_{t}^{2}(r+1) \Big[ w_{t}^{2} (l_{s}^{\star 2}(r+1) - l_{t}^{\star 2}(r+1)) +(q_{s}^{2} - q_t^{2}) + \\
        &+ \mathcal{P}_{t}^{2}(r)\big(Q_s-Q_t\big) + P_t^{2}(r) \big( W_{s}(r+1) - W_{t}(r+1) \big) \Big].
    \end{align*}

    \noindent
    For convenience, let
    \begin{align*}
     \phantom{denom^{1}}
        &\begin{aligned}
        \mathllap{num}^{1} &:= U_{t}^{1}(r) - U_{s}^{1}(r) + \lambda_{t}^{1}(r+1) \Big[ w_{t}^{1}(l_{s}^{\star 1}(r+1) - l_{t}^{\star 1}(r+1)) + (q_{s}^{1} - q_{t}^{1}) + \\
        &+ \mathcal{P}_{t}^{1}(r)\big(Q_s-Q_t\big) + P_t^{1}(r) \big( W_{s}(r+1) - W_{t}(r+1) \big) \Big]
        \end{aligned} \\
        &\begin{aligned}
        \mathllap{denom}^{1} &:= \xi(U_{s}^{1}) - \xi(U_{t}^{1}) + \\ &+ \lambda_{t}^{1}(r+1) \big[ \xi(\mathcal{P}_{t}^{1})\big(Q_t-Q_s\big) + \xi(P_t^{1}) \big( W_{t}(r+1) - W_{s}(r+1) \big)  \big]
        \end{aligned}
    \end{align*}

    \noindent
    and
    \begin{align*}
     \phantom{denom^{2}}
        &\begin{aligned}
        \mathllap{num}^{2} &:= U_{t}^{2}(r) - U_{s}^{2}(r) + \lambda_{t}^{2}(r+1) \Big[ w_{t}^{2} (l_{s}^{\star 2}(r+1) - l_{t}^{\star 2}(r+1)) + (q_s^{2} - q_t^{2}) + \\
        &+ \mathcal{P}_{t}^{2}(r)\big(Q_s - Q_t\big)+ P_t^{2}(r)  \big( W_{s}(r+1) - W_{t}(r+1) \big) \Big]
        \end{aligned} \\
        &\begin{aligned}
        \mathllap{denom}^{2} &:= \xi(U_{s}^{2}) - \xi(U_{t}^{2}) + \\ +& \lambda_{t}^{2}(r+1) \big[ \xi(\mathcal{P}_{t}^{2})\big(Q_t-Q_s\big) + \xi(P_t^{2}) \big( W_{t}(r+1) - W_{s}(r+1) \big) \big].
        \end{aligned}
    \end{align*}

    \noindent
    Therefore, I have
    \begin{align*}
        \beta \Delta \frac{num^{i}}{denom^{i}},
    \end{align*}

    \noindent
    where $\Delta := <$ if $denom^{i} > 0$ and $\Delta := >$ otherwise. I obtain $(U_{t}^{i}(r+1), \mathcal{P}_{t}^{i}(r+1), P_t^{i}(r+1))_{t \in \mathcal{T}}$ by picking $\beta$ uniformly over its support defined by the greatest lower bound and least upper bound derived from the above inequalities.\\

    \noindent
    \textbf{Step 4: Children Expenditure}

    \noindent
    I need to pick new true children expenditure that is positive such that
    \begin{align}
        c_{t}^{\star}(r) + \kappa \xi(c_{t}^{\star}) &> 0.
    \end{align}

    \noindent
    In addition, new children expenditure must yield decreasing returns to scale such that
    \begin{equation*}
        w_{t}^{1}h_{t}^{\star 1}(r+1) + w_{t}^{2}h_{t}^{\star 2}(r+1) + c_{t}^{\star}(r) + \kappa \xi(c_{t}^{\star}) \leq P_t(r+1)W_t(r+1).
    \end{equation*}

    \noindent
    Rearranging, one gets
    \begin{equation*}
        \kappa \Delta \frac{P_t(r+1)W_t(r+1) - w_{t}^{1}h_{t}^{\star 1}(r+1) -w_{t}^{2}h_{t}^{\star 2}(r+1) - c_{t}^{\star}(r)}{\xi(c_{t}^{\star})},
    \end{equation*}

    \noindent
    where $\Delta := <$ if $\xi(c_{t}^{\star}) > 0$ and $\Delta := >$ otherwise. I obtain $c_{t}^{\star}(r+1)$ by picking $\kappa$ uniformly over its defined by the greatest lower bound and least upper bound derived from the above inequalities. \\
    
    
    \noindent
    \textbf{Step 5: Production Parameters and Productivity}

    \noindent
    I am left with the task to recover output elasticities and productivity shocks. This requires no work as they are directly deduced from the first-order conditions and production function equation:
    \begin{align*}
        \alpha_{1,t}(r+1) &= \frac{w_{t}^{1}h_{t}^{\star 1}(r+1)}{P_t(r+1) W_{t}(r+1)} \\
        \alpha_{2,t}(r+1) &= \frac{w_{t}^{2}h_{t}^{\star 2}(r+1)}{P_t(r+1) W_{t}(r+1)} \\
        \alpha_{3,t}(r+1) &= \frac{c_{t}^{\star}(r+1)}{P_t(r+1) W_{t}(r+1)} \\
        \epsilon_t(r+1) &= \log(W_t(r+1)) - \alpha_{1,t}(r+1)\log(h_{t}^{\star 1}(r+1)) \\ &- \alpha_{2,t}(r+1)\log(h_{t}^{\star 2}(r+1)) - \alpha_{3,t}(r+1)\log(c_{t}^{\star}).
    \end{align*}

    \noindent
    This last step of the Gibbs sampler combined with the previous steps give a completely new solution to the model.\\

    \noindent
    \textbf{Sampling from the Feasible Space in $5$ Easy Steps}

    \noindent
    Suppose an initial solution $r = 0$ to the household problem is given (e.g., by solving a mixed-integer program). Then,
    \begin{enumerate}
        \item Given $r$, get $(\lambda_{t}^{i}(r+1))_{t \in \mathcal{T}}$ as outlined in Step $1$.
        \item Given $1$, get $(W_t(r+1), l_{t}^{\star i}(r+1), h_{t}^{\star i}(r+1))_{t \in \mathcal{T}}$ as outlined in Step $2$.
        \item Given $1$-$2$, get $(U_{t}^{i}(r+1), \mathcal{P}_{t}^{i}(r+1),P_t^{i}(r+1))_{t \in \mathcal{T}}$ as outlined in Step $3$.
        \item Given $1$-$3$, get $(c_{t}^{\star}(r+1))_{t \in \mathcal{T}}$ as outlined in Step $4$.
        \item Given $1$-$4$, get $(\bm{\alpha}_t(r+1), \epsilon_t(r+1))_{t \in \mathcal{T}}$ as outlined in Step $5$.
        \item Set $r = r+1$ and repeat $1$-$5$ until $r = R > 0$.\\
    \end{enumerate}

    \subsection{Miscellaneous}

    This subsection provides additional details about the sampling procedure.\\

    \noindent
    \textbf{Target Distribution}
    
    \noindent
    I ensure that the sampling procedure yields the desired least favorable distribution on measurement error by using a Metropolis-Hastings algorithm. Once a complete new solution is obtained from the Gibbs sampler, update the Markov chain with the appropriate acceptance ratio. Note that the acceptance ratio depends on the target distribution. In the application, the target distribution is proportional to a normal distribution:
    \begin{equation*}
        d\tilde{\eta}(\cdot|x_i) \propto \text{exp} \left(-||\bm{g}_{i}^{m}(x_i,e_i)||^{2} \right).
    \end{equation*}

    \noindent
    As pointed out by \cite{Schennach2014}, under mild regularity conditions the mean and variance of the distribution are inconsequential for the validity of Proposition \ref{propositionELVIS}. \\

    \noindent
    \textbf{Length of the Monte Carlo Markov Chain}

    \noindent
    The Gibbs sampler generates a Markov Chain that suffers from autocorrelation. For this reason, it is good practice to only keep a subset of the $R$ solutions, a technique known as thinning. In the application, I keep $5\%$ of all solutions. Also, the theory of stochastic processes tells us that convergence to the stationary distribution may take some time \textemdash its existence follows by construction of the Metropolis-Hastings algorithm. Accordingly, it is good practice to leave out the first few solutions. In the application, I leave out the first $100000$ solutions. I then draw another $100000$ solutions from the feasible space.

    \bibliographystyle{econ}
    \bibliography{references}

\end{document}